\newtheorem{lemma}{Lemma}
\newtheorem{theorem}{Theorem}
\newtheorem{corollary}{Corollary}
\theoremstyle{plain}
\begin{document}
\captionsetup[figure]{labelformat={default},labelsep=period,name={Fig.}}

\title{Coverage and Spectral Efficiency of NOMA-Enabled LEO Satellite Networks with Ordering Schemes}

\author{Xiangyu Li,
        Bodong Shang,~\IEEEmembership{Member,~IEEE},
        Qingqing Wu,~\IEEEmembership{Senior Member,~IEEE},
        Chao Ren,~\IEEEmembership{Member,~IEEE}
\thanks{This work was supported in part by the Ningbo Natural Science Foundation under grant 20253021, in part by the State Key Laboratory of Integrated Services Networks, Xidian University, and in part by the open research fund of National Mobile Communications Research Laboratory, Southeast University (No. 2026D11). The work of Qingqing Wu is supported by Shanghai JXW Fund JJ-GGFWPT-01-24-0030. 
\textit{(Corresponding author: Bodong Shang)}}
\thanks{Copyright (c) 20xx IEEE. Personal use of this material is permitted. However, permission to use this material for any other purposes must be obtained from the IEEE by sending a request to pubs-permissions@ieee.org.}
\thanks{Xiangyu Li is with the Department of Electronic Engineering, Shanghai Jiao Tong University, Shanghai 200240, China, and is also with the State Key Laboratory of Integrated Services Networks, Xidian University, Xi’an, China, and is also with Zhejiang Key Laboratory of Industrial Intelligence and Digital Twin, Eastern Institute of Technology, Ningbo, Zhejiang 315200, China (e-mail: xyli@eitech.edu.cn).}
\thanks{Bodong Shang is with State Key Laboratory of Integrated Services Networks, Xidian University, Xi’an, China, and is also with National Mobile Communications Research Laboratory, Southeast University, Nanjing, China, and is also with Zhejiang Key Laboratory of Industrial Intelligence and Digital Twin, Eastern Institute of Technology, Ningbo, Zhejiang 315200, China (e-mail: bdshang@eitech.edu.cn).}
\thanks{Qingqing Wu is with the Department of Electronic Engineering, Shanghai Jiao Tong University, Shanghai 200240, China (e-mail: qingqingwu@sjtu.edu.cn).}
\thanks{Chao Ren is with University of Science and Technology Beijing, Beijing 100083, China (e-mail: chaoren@ustb.edu).}
}

\maketitle

\begin{abstract}
This paper investigates an analytical model for low-earth orbit (LEO) multi-satellite downlink non-orthogonal multiple access (NOMA) networks. The satellites transmit data to multiple NOMA user terminals (UTs), each employing successive interference cancellation (SIC) for decoding. Two ordering schemes are adopted for NOMA-enabled LEO satellite networks, i.e., mean signal power (MSP)-based ordering and instantaneous signal-to-inter-satellite-interference-plus-noise ratio (ISINR)-based ordering. For each ordering scheme, we derive the analytical expression for the coverage probability of each typical UT. Moreover, we discuss how coverage is influenced by SIC, main-lobe gain, and tradeoffs between the number of satellites and their altitudes. Additionally, two user fairness-based power allocation (PA) schemes are considered, and PA coefficients with the optimal number of UTs that maximize their sum spectral efficiency (SE) are studied. Simulation results show that there exists a maximum effective signal-to-inter-satellite-interference-plus-noise ratio (SINR) threshold for each PA scheme that ensures the operation of NOMA in LEO satellite networks, and NOMA provides performance gains only when the target SINR is below a certain threshold. Compared with orthogonal multiple access (OMA), NOMA increases UTs' sum SE by as much as 35\%.
Furthermore, for most SINR thresholds, the sum SE increases with the number of UTs to the highest value, whilst the maximum sum SE is obtained when there are two UTs.
\end{abstract}

\begin{IEEEkeywords}
Low-earth orbit satellite networks, non-orthogonal multiple access, coverage probability, spectral efficiency, interference modeling.
\end{IEEEkeywords}

\IEEEpeerreviewmaketitle

\section{Introduction}
Recent decades have witnessed an overwhelming trend in the development of satellite networks~\cite{kodheli2020satellite}. Compared to terrestrial networks (TNs), satellite networks are envisioned to have a more significant potential for universal coverage and ubiquitous connectivity feasibly and economically, particularly for unserved or underserved regions~\cite{okati2022nonhomogeneous}.
Different space-borne platforms, including geostationary-earth orbit (GEO) satellites, medium-earth orbit (MEO) satellites, and low-earth orbit (LEO) satellites, have been established. Among them, LEO satellites are attracting the attention of researchers from academia and industry for their higher data rates, lower latency, reduced power consumption, and decreased production and launch costs~\cite{li2024bistatic,li2025instant}.
This trend has contributed to the explosive growth of various LEO satellite-based applications and services, which demand higher data rates and spectral efficiency (SE) for an increasing number of user terminals (UTs)~\cite{yan2023noma,ouyang2025vulnerability}.

Despite the mounting number of LEO satellites in constellations such as Starlink, OneWeb, and Telesat, the number of UTs keeps increasing. 
This means that the limited spectrum resources deployed by satellite networks may not be enough for all these UTs, and the quality of service (QoS) requirements of resource-intensive applications and services can hardly be met. 
In view of this situation, there is an urgent need to develop innovative technologies and efficient resource allocation strategies, such as multiple access techniques, to improve spectrum utilization and support simultaneous communications with a growing number of UTs.

\subsection{Related Works}
Many recent works have studied LEO satellite networks \cite{okati2020downlink,park2022tractable,jia2022analytic,deng2021ultra}.
Okati \textit{et al.} \cite{okati2020downlink} proposed a basic satellite framework under Rayleigh fading to pave the way for accurate analysis and design of future dense satellite networks. 
To capture the effects of relevant parameters, such as satellite altitude, density, and fading parameters, a more tractable and practical model was developed in \cite{park2022tractable} under Nakagami fading, showing that there is an optimal number of satellites at different satellite heights to maximize coverage performance. 
To provide deployment guidance for mega-satellite constellations, authors in \cite{jia2022analytic} investigated an analytical approach to evaluate the uplink performance under a Nakagami fading-approximated Shadowed-Rician fading model. 
Ultra-dense LEO satellite constellations were also investigated in \cite{deng2021ultra} to show that the proposed LEO satellite constellation has a higher coverage ratio than the benchmark Telesat constellation.
However, these works assumed that a satellite serves only one UT within a given time-frequency resource block (RB), an approach that becomes increasingly impractical as the number of ground UTs continues to grow.
Therefore, it is imperative to explore and adopt mature and advanced methodologies to effectively address the challenges posed by scarce spectrum resources and maximize their utilization.

To share these limited spectrum resources, current satellite systems mainly adopt the orthogonal multiple access (OMA) schemes, which consist of time-division multiple access (TDMA), frequency-division multiple access (FDMA), code-division multiple access, space-division multiple access (SDMA), etc.~\cite{ding2017survey}. Unfortunately, although the OMA scheme effectively avoids interference among UTs by serving only one UT per time-frequency RB, improvements in resource utilization efficiency are limited. In addition, a higher service priority is assigned to UTs with better channel conditions,
which sacrifices the fairness of resource allocation among them.
To address such concerns, power-domain non-orthogonal multiple access (NOMA) is deemed a promising candidate for more efficient spectrum reuse. With NOMA, multiple signals are superposed in the power domain before transmission; the receiver applies successive interference cancellation (SIC) to decode the superposed signals and remove the interference caused by superposition~\cite{liu2017nonorthogonal}.
Thus, multiple UTs can be served simultaneously in the same time-frequency RB, and SE can be improved at the cost of reasonably increased complexity~\cite{saito2013non}.
At the same time, the reliability and fairness of the UT can be improved.

In recent years, although many works have explored NOMA-enabled TNs, their extensions to NOMA-enabled non-terrestrial networks (NTNs), especially LEO satellite networks, remain limited and lack comprehensive investigation. 
Ding et al.~\cite{ding2014performance} were among the first to investigate the outage performance and the ergodic sum rate of the NOMA downlink system with randomly distributed UTs. NOMA uplink transmission was proposed in~\cite{zhang2016uplink}, where the outage probabilities and data rates of the first and second UTs were compared with closed-form solutions. However, these studies primarily focus on TNs and overlook the complex interference scenarios in NTNs, including inter-satellite interference. 
While~\cite{liang2019non} and~\cite{ali2019downlink} investigated the uplink and downlink performance of NOMA in Poisson networks with a general number of UTs with both inter-cell and intra-cell interference, they lack sufficient channel modeling and analysis suitable for cross-satellite interference.
Further, power allocation (PA) is another critical challenge in NOMA systems and has been widely studied. For example, the authors in~\cite{cui2016novel} showed a PA scheme based on maximum-minimum rate fairness to ensure QoS for the worst UT. In~\cite{shipon2016dynamic}, a joint design of dynamic user clustering and PA was investigated for NOMA uplink and downlink, with closed-form expressions provided for optimal PAs in any cluster size. 
However, these studies did not fully account for the characteristics of satellite communication systems, particularly the resource competition among NOMA UTs in NTNs.

While the potential benefits of NOMA in TNs have been widely discussed, recent studies have shown that it also holds great promise for satellite communications. In~\cite{yan2019application}, a generalized framework for the applicability of NOMA in satellite networks was introduced, highlighting its potential advantages in ergodic rate enhancement~\cite{yan2019ergodic}, energy efficiency maximization~\cite{mirbolouk2022relay}, and outage probability reduction~\cite{gao2020performance}. 
However, most of these studies were conducted under specific conditions, such as assuming ideal SIC or neglecting the complexity of cross-satellite interference.
In~\cite{yan2023noma}, to guarantee individual performance, a QoS-limited satellite NOMA system was studied to show that a relatively large transmit signal-to-noise ratio (SNR) was suitable for UTs with better channel conditions, while UTs with poorer link gain desired a relatively small SNR. 
In the context of NOMA-enabled land mobile satellite networks, the uplink and downlink outage probabilities were investigated in~\cite{tegos2020outage} and~\cite{yan2018performance}, respectively.
In~\cite{li2024performance,yang2025performance}, an OMA / NOMA-aided satellite communication network with a near-user and a far-user was studied to show the superiority of NOMA over OMA. 
Nonetheless, the impact of inter-satellite interference and resource competition, which introduce more complex interference patterns, was not adequately considered. In addition, the performance of integrating more than two UTs remains unknown.

To systematically capture the irregular topology of LEO satellite networks, stochastic geometry has been employed to provide a unified mathematical paradigm~\cite{haenggi2009stochastic}. Several works have focused on the performance analysis of LEO satellite networks using this method. For example,~\cite{talgat2020stochastic} analyzed the downlink coverage probability considering satellite altitudes and numbers, where satellite locations were distributed according to the binomial point process (BPP) in a finite space. However, to improve tractability, the Poisson point process (PPP) and the spherical Poisson point process (SPPP) have become the preferred choice for modeling LEO satellite networks~\cite{okati2021modeling}. They have been used to study the outage probability, ergodic capacity, and uplink interference~\cite{jia2021uplink}, and have been extended to scenarios involving coordinated beamforming~\cite{kim2024coverage}, joint transmission~\cite{li2024analytical}, multi-connectivity between TNs and NTNs~\cite{shang2024multi}, etc.
Despite these advances, the joint effects of line-of-sight (LoS) conditions, satellite numbers and altitudes, imperfect SIC, main-lobe gains, PA schemes, and cross-satellite interference have not been sufficiently studied for downlink satellite NOMA networks.

\subsection{Motivations}
We have noticed several research gaps in NOMA-enabled LEO multi-satellite networks. 
\begin{itemize}
    \item First, previous studies of satellite NOMA transmissions \cite{li2024performance,yang2025performance} typically considered a two-user scenario comprising a near-user and a far-user. However, as the number of users requiring satellite services continues to grow, the impact of the NOMA mechanism with multiple users on system performance remains unknown.
    \item Second, when applying NOMA, previous works assumed that there is only one satellite, i.e., the serving satellite, thereby neglecting the influence of inter-satellite interference. Furthermore, the impacts of other system parameters, such as practical SIC effects, main-lobe and side-lobe gains, number of satellites, and their altitude, have not been explored.
    \item Third, it also remains uncertain whether a channel-dependent ordering scheme should still be employed or if a distance-dependent ordering scheme might be more appropriate for simplicity, given that the multi-path fading components are relatively weaker than the direct propagation paths. 
    \item Fourth, the implementation of user fairness through various PA schemes requires investigation.
\end{itemize}

Motivated by the above observations, this paper aims to investigate the potential of NOMA from a system-level perspective to improve the performance of satellite communication networks. 
It is also essential to rigorously quantify the extent of performance improvement and systematically examine the influence of various parameters.
In particular, stochastic geometry is applied to investigate a large-scale LEO multi-satellite downlink NOMA network that incorporates both intra- and inter-satellite interference.
We also study the coverage and sum SE of NOMA UTs in terms of mean signal power (MSP) ordering, which is tantamount to distance-based ordering, and instantaneous signal-to-inter-satellite-interference-plus-noise ratio (ISINR) ordering.

\subsection{Contributions and Paper Organization}
The main contributions of this paper are as follows:
\begin{itemize}
    \item \textbf{LEO Satellite NOMA System Design:} We establish a NOMA-enabled LEO multi-satellite network, where a typical LEO satellite simultaneously serves multiple UTs within its serving area in the same time-frequency RB. To capture the randomness of both satellites and UTs, we model the satellites in an SPPP manner at the same orbital altitude and let the UTs follow a uniform distribution. We also compare the PPP constellation model with the Starlink constellation model to assess its accuracy and effectiveness.
    \item \textbf{Modeling and Analysis:} To capture the impact of LoS effects on system performance, we use the Nakagami-$m$ distribution to model the small-scale fading channel between satellites and UTs. The closed-form expressions for the coverage probabilities of typical UTs are derived under two types of ordering schemes using stochastic geometry and order statistics. By conducting simulations and numerical calculations, we quantitatively evaluate the system in terms of coverage probability and SE.
    \item \textbf{System Design Insights:} Simulations and numerical results show that strong LoS effects, better SIC in UTs, and higher main-lobe gains of the serving satellite positively contribute to the coverage improvements of UTs. There is a trade-off between the number of satellites and their altitudes to maximize the mean coverage, and consequently, the mean data rate for the served UTs. We also show that every PA scheme has a maximum effective SINR threshold that ensures NOMA operation, and that the benefits of NOMA over OMA are restricted to certain SINR thresholds. The sum SE increases with the number of UTs, but reaches a specific highest value; the maximum sum SE is obtained when only two UTs are served.
\end{itemize}

The remainder of this paper is organized as follows. Section \uppercase\expandafter{\romannumeral2} elaborates the model of the LEO satellite NOMA system.
In Section \uppercase\expandafter{\romannumeral3}, the coverage probabilities of typical UTs are derived under two types of ordering schemes.
Section \uppercase\expandafter{\romannumeral4} provides closed-form expressions of coverage probability under particular channel coefficients.
In Section \uppercase\expandafter{\romannumeral5}, simulations and numerical results are presented to show the accuracy of the derivations and to evaluate the impacts of related parameters.
Finally, Section \uppercase\expandafter{\romannumeral6} concludes the paper. The main notations and related descriptions are summarized in Table \ref{notation}.

\begin{table}[!t]
\setlength{\abovecaptionskip}{0cm}
\captionsetup{font={scriptsize}}
\caption{Main notations and descriptions}
\label{notation}
\begin{center}
\small
\begin{tabular}{c|p{6.4cm}}
\hline
\hline
\textbf{Notations } &  \textbf{Descriptions} \\ 
\hline
\hline
$l_i$ & Distance from the typical satellite to UT$_i$ \\
$l_0$ & Path-loss on the reference distance \\
$\alpha$ & Path-loss coefficient \\
$Z_i$ & ISINR of UT$_i$ \\
$N_{\textcolor{black}{\text{S}}}$ & Total number of satellites\\
$R_{\textcolor{black}{\text{T}}}$ & Radius of the typical satellite's serving area\\
$R_{\textcolor{black}{\text{E}}}$ & Radius of the earth\\
$H_{\textcolor{black}{\text{S}}}$ & Altitude of satellites \\
$R_{\textcolor{black}{\text{S}}}$ & Revolution radius of satellites\\
$\lambda_{\textcolor{black}{\text{S}}}$ & Density of satellites \\
$N_{\textcolor{black}{\text{U}}}$ & Total number of typical UTs\\
$O$ & Projection point of the typical satellite on the earth \\
$\mathbf{s}_t$ & A symbol for location of LEO satellite, where $t=0$ for the typical satellite, $t=1$ for nearest interfering satellite, and $t\geq 2$ for remaining interfering satellites likewise \\
$\mathbf{u}_i$ & Location of UT$_i$ \\
$\left\|\mathbf{d}_{i}\right\|$ & Distance between satellite $\mathbf{s}_t$ and UT$_i$ \\
$c$ & Speed of light \\
$q_j$ & Transmitted data symbol \\
$f_c$ & Carrier frequency \\
$\hat{h}_i$ & Channel coefficient between typical satellite and UT$_i$ \\
$h_i$ & Small-scale fading parameter between typical satellite and UT$_i$ \\
$\hat{g}_i$ & Channel coefficient between interfering satellite and UT$_i$ \\
${g}_{\mathbf{d}_{i}}$ & Small-scale fading parameter between interfering satellites and UT$_i$ \\
$m$ & Shape parameter of Nakagami distribution  \\
$\Omega$ & Scale parameter of Nakagami distribution \\
$\kappa$ & Shape parameter of Gamma distribution  \\
$\beta$ & Inverse scale parameter of Gamma distribution \\
$\varpi$ & Residual intra-satellite interference (RI) factor\\
$P$ & Total transmit power of a satellite \\
$p_i$ & Fraction of power allocated for UT$_i$ \\
$G_\text{ml}$ & Main-lobe gain of a satellite \\
$G_\text{sl}$ & Side-lobe gain of a satellite \\
$\omega$ & Addictive white Gaussian noise (AWGN)\\
$I_{j,i}^{\text{intra}}$ & Intra-satellite interference signal power when UT$_i$ decodes the message for UT$_j$\\
$I_{i}^{\text{inter}}$ & Inter-satellite interference signal power received by UT$_i$ \\
$\Phi_{\textcolor{black}{\text{S}}}$ & Set of all interfering satellites \\
\hline
\hline
\end{tabular}
\vspace{-2.0em} 
\end{center}
\end{table}

\section{System Model}
In this section, we describe the considered LEO satellite NOMA system, which consists of the network model, channel model, and signal analysis. Specifically, for $1\leq i\leq N_{\textcolor{black}{\text{U}}}$, where $N_{\textcolor{black}{\text{U}}}$ denotes the total number of typical UTs, two NOMA ordering schemes are considered:
\begin{itemize}
    \item \textbf{Ordering scheme 1: MSP ordering} is a fading-free ordering that uses the total unit power received at UTs. For the $i$-th nearest UT from the typical satellite, the typical UTs are indexed according to their ascending distance $l_i$ from the typical satellite, i.e., descending ${l_i}^{-\alpha}$, where $\alpha$ is the path-loss coefficient.
    \item \textbf{Ordering scheme 2: ISINR ordering} is a fading-based ordering that involves ISINR ${{Z}_{i}}=\frac{{{l}_{i}}^{-\alpha }{{\left| h \right|}^{2}}}{I_{i}^{\text{inter}}+{\bar{\sigma}^{2}}}$ of the $i$-th typical UT. In ISINR ordering, UT$_i$ has the $i$-th largest ISINR, and typical UTs are indexed in terms of descending ordered $Z_i$.
\end{itemize}

\subsection{Network Model}

\captionsetup{font={scriptsize}}
\begin{figure*}[!t]
\begin{center}
\setlength{\abovecaptionskip}{+0.2cm}
\setlength{\belowcaptionskip}{-0.0cm}
\centering
  \includegraphics[width=7.0in, height=3.0in]{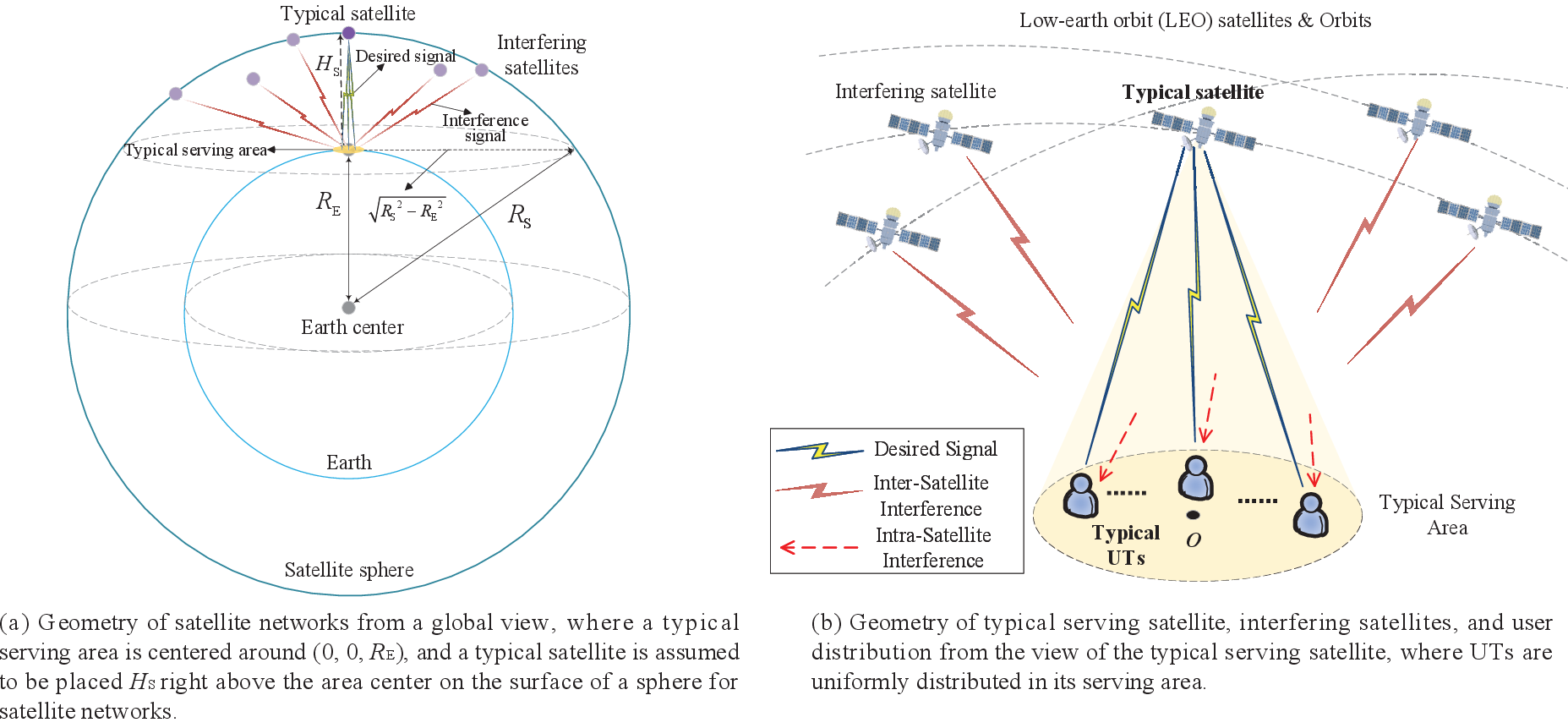}
\renewcommand\figurename{Fig.}
\caption{\scriptsize \textcolor{black}{An illustration of NOMA-enabled LEO multi-satellite networks.}}
\label{fig1}
\end{center}
\vspace{-6mm}
\end{figure*}

As shown in Fig. \ref{fig1}(a), a NOMA-enabled LEO multi-satellite network consists of $N_{\textcolor{black}{\text{S}}}$ LEO satellites at an altitude $H_{\textcolor{black}{\text{S}}}$ on the surface of a shell sphere. Denote the radius of the earth as $R_{\textcolor{black}{\text{E}}}$, and the revolution radius of the satellite, i.e., the distance from a satellite to the center of the earth, as $N_{\textcolor{black}{\text{S}}} = R_{\textcolor{black}{\text{E}}} + H_{\textcolor{black}{\text{S}}}$. LEO satellites are distributed according to a homogeneous SPPP with a density 
$\lambda_{\textcolor{black}{\text{S}}} = \frac{N_{\textcolor{black}{\text{S}}}}{4\pi \left(R_{\textcolor{black}{\text{E}}} + H_{\textcolor{black}{\text{S}}}\right)^2}$. 

In Fig. \ref{fig1}(b), it is assumed that a typical satellite is located at the top of the spherical shell surface.
A total of $N_{\textcolor{black}{\text{U}}}$ UTs served by the satellite in one frequency band are uniformly distributed in a spot beam, i.e., a circular serving area\footnote{\textcolor{black}{Considering the much larger Earth radius $R_{\textcolor{black}{\text{E}}} = 6,371.393$ km and the satellite altitude $H_{\textcolor{black}{\text{S}}} = 500$ km to be mentioned in the following sections, the Earth curvature effects on the projected ground area can be negligible. Therefore, the terrestrial region of interest can be approximated as a circular planar disk for analytical tractability, following the modeling approach in \cite{dong2024stochastic}. More detailed curvature-aware modeling can be found in \cite{dong2025modeling} and is left for future work.}}, of the satellite. 
The service area is centered at the vertical projection point $O$ of the typical satellite on the ground, within which the UTs are referred to as typical UTs.
It is assumed that UTs remain quasi-static during the considered transmission period. This ``snapshot''-based analysis enables a clear evaluation of the fundamental effects of ordering schemes and interference at the system level, while avoiding the additional complexity introduced by user mobility and time-varying satellite geometry.
By excluding the cases where the typical satellite is not the nearest satellite to typical UTs, it can be assumed that the serving area of satellites does not overlap, which is consistent with the fact that typical UTs are only served by the typical satellite within a time-frequency RB. 
The frequency reuse scheme is implemented to allow UTs outside the service area of a satellite to be served by other satellites or frequencies.

\subsection{Channel Model}
Let $\mathbf{s}_t$ with integer $t\ge 0$ represent the locations of LEO satellites,
where ${\mathbf{s}_{0}}$ implies the typical satellite, ${\mathbf{s}_{1}}$ implies the nearest interfering satellite and ${\mathbf{s}_{t}}$ for $t\geq 2$ implies the remaining interfering satellites.
The large-scale fading is modeled as $L(l_i) = \left( \frac{c}{4\pi f_c} \right)^2 {l_i}^{-\alpha} = L_{\text{pl}} {l_i}^{-\alpha}$~\cite{jia2022analytic}, where $c$ is the speed of light, $f_c$ is the carrier frequency, $\alpha$ is the path-loss coefficient, and $l_i$ is the distance between the satellite ${\mathbf{s}_{0}}$ and the typical UT$_i$.
The channel coefficient is then denoted as ${\hat{h}_i=l_{i}}^{-\alpha/2}h_i$, where $h_i$ is the small-scale fading coefficient.
Similarly, the channel coefficient between an interfering satellite and UT$_i$ is denoted as ${{\hat{g}}_{i}}={\left\| {\mathbf{d}_{i}} \right\|}^{-\alpha /2} {g}_{\mathbf{d}_{i}}$, where $\left\|\mathbf{d}_{i}\right\|=\left\|\mathbf{s}_{t}-\mathbf{u}_{i}\right\|$, $t\geq 1$, $\mathbf{u}_{i}$ is the location of UT$_i$, and ${g}_{\mathbf{d}_{i}}$ is the small-scale fading coefficient. 

The Nakagami-$m$ distribution is adopted for the modeling of small-scale channel fading
because it is versatile for various small-scale fading conditions, e.g., $m=1$ for the Rayleigh channel and $m=\frac{(K+1)^2}{2K+1}$ for the Rician channel \cite{park2022tractable}. 
Assume that $\Omega=\mathbb{E}\left\{\left|h_i\right|^2\right\}=\mathbb{E}\left\{\left|g_{\mathbf{d}_{i}}\right|^2\right\}=1$. The probability density function (PDF) of $\left|h_i\right|\left(\left|g_{\mathbf{d}_{i}}\right|\right)$ is given by
\begin{equation}
    f_{\left|h_i\right|\left(\left|g_{\mathbf{d}_{i}}\right|\right)}(x) 
    = \frac{2m^m}{\Gamma(m)\Omega^m}x^{2m-1}e^{-\frac{m}{\Omega} x^2}, 
\end{equation}
for $x \geq 0$, and Gamma function $\Gamma(m)=(m-1)!$ for integer $m>0$.
The change in parameter $m$ makes it suitable for modeling signal propagation conditions from severe to moderate\footnote{\textcolor{black}{Compared with the Nakagami-$m$ fading model, the Shadowed-Rician (SR) fading has been recognized as a more suitable statistical representation of satellite channel characteristics. Nevertheless, the inherent complexity of its PDF renders the derivation of closed-form analytical expressions highly challenging, particularly in the presence of interference. Moreover, as demonstrated in \cite{abdi2003new}, the SR distribution can be approximated by a Gamma random variable by applying the moment matching technique, and this has already been used in \cite{jia2022analytic,jia2021uplink}. Therefore, although adopting Nakagami-$m$ fading is not the most precise choice, it allows for a more tractable analysis and facilitates the derivation of insightful theoretical results \cite{yang2025performance}.}}.
Correspondingly, the power of small-scale Nakagami-$m$ channel coefficient $\left|h_i\right|\left(\left|g_{\mathbf{d}_{i}}\right|\right)$ subject to Gamma distribution whose PDF is 
\begin{equation}
    {f_{{{\left| h_i \right|}^{2}}\left( {{\left| g_{\mathbf{d}_{i}} \right|}^{2}} \right)}}\left( x \right) 
    \approx \frac{{{x}^{\kappa -1}}{{e}^{-\beta x}}{{\beta }^{\kappa }}}{\Gamma \left( \kappa  \right)}, 
\end{equation}
where $\kappa$ is a shape parameter and $\beta$ is an inverse scale parameter. Due to their square-law relationship~\cite{nakagami1960the,holm2004sum}, we have $m=\kappa=\beta$. To capture their exact position, $m=\kappa$ and $\beta$ are written separately.

\subsection{Signal Analysis}
Apart from the desired signal from the typical satellite, two other types of interference signals also count. Firstly, the typical UTs suffer from inter-satellite interference, which arises from signals transmitted by non-serving interfering satellites.

Secondly, intra-satellite interference is caused by the co-channel interference of the NOMA scheme. 
\textcolor{black}{
The core principle of SIC lies in a decoding–reconstruction–subtraction (DRS) process. Upon receiving the composite signal, SIC first decodes the strongest user while treating others as interference. The decoded data are then re-encoded according to the estimated channel and modulation parameters to reconstruct the signal, which is subsequently subtracted from the aggregate signal. This subtraction reduces interference, enabling the following DRS stages to decode the remaining users more effectively.
Note that the reconstructed signal tends to be very close to the received signal under perfect conditions. However, since SIC is generally imperfect in practice, the residual intra-satellite interference (RI) factor $\varpi \in [0,1]$, i.e., the error propagation factor, to show the decoding conditions similar to \cite{sun2016non}. This factor $\varpi$ is adopted to measure its impact on the system.
}
Herein, $\varpi=0$ represents the perfect SIC condition and $\varpi=1$ corresponds to no SIC at all.

The total transmit power of a satellite is denoted as $P$, and the fraction of power allocated for UT$_i$ as $p_i \in (0,1)$ so that $\sum_{i=1}^{N_{\textcolor{black}{\text{U}}}} p_i = 1$. The main-lobe gain and the side-lobe gain of the satellite are set to $G_\text{ml}$ and $G_\text{sl}$, respectively\footnote{It is important to note that the main lobe and side lobes are non-overlapping in the angular domain, which is typically ensured through careful satellite beam design and isolation techniques. The following  (\ref{Formula_received_signal_y}) models the interference power contribution from the side lobes associated with interfering satellites rather than from the serving satellite itself \cite{park2022tractable}.}. 
Denote the set of all interfering satellites by $\Phi_{\textcolor{black}{\text{S}}}$. 
Then, in the serving area of the typical satellite, the received signal at UT$_i$ of the message intended for UT$_j$ for $i\leq j\leq {\textcolor{black}{N_\text{U}}}$ is expressed as
\begin{align}
    y_{j}^{i}
    = & \sqrt{{{G}_{\text{ml}}}}\sqrt{{{p}_{j}}P} L_{\text{pl}} {{l}_{i}}^{-\alpha /2}{{h}_{i}} {{q}_{j}} \nonumber \\
    & +\sqrt{{{G}_{\text{ml}}}}\left( \sum\limits_{m=1}^{j-1}{\sqrt{{{p}_{m}}P}}+\varpi \sum\limits_{k=j+1}^{N_{\textcolor{black}{\text{U}}}}{\sqrt{{{p}_{k}}P}} \right) L_{\text{pl}} {{l}_{i}}^{-\alpha /2}{{h}_{i}} {{q}_{j}} \nonumber \\
    & + \sqrt{{{G}_{\text{sl}}}}\sqrt{P} {{\sum\limits_{\mathbf{s}\in \Phi_{\textcolor{black}{\text{S}}} } L_{\text{pl}} {\left\| {\mathbf{d}_{i}} \right\|}}^{-\alpha /2}} g_{\mathbf{d}_{i}} {{q}_{j}} + \omega,
\label{Formula_received_signal_y}
\end{align}
where $q_j$ is the transmitted data symbol with $|q_j|^2=1$, $\omega$ is the additive white Gaussian noise (AWGN) and $\omega \sim \mathcal{CN}(0,\sigma^2)$.

Then, the SINR at UT$_i$ of the message intended for UT$_j$ is given by (\ref{Formula_SINR}) at the top of the next page, 
\begin{figure*}[!ht]
\setlength{\abovecaptionskip}{-1.5cm}
\setlength{\belowcaptionskip}{-0.5cm}
\normalsize
\vspace{-2mm}
\begin{equation}
\begin{aligned}
    \text{SINR}_{j}^{i}
     & = \frac{{{p}_{j}}{{l}_{i}}^{-\alpha }{{\left| {{h}_{i}} \right|}^{2}}}{\left( \sum\limits_{m=1}^{j-1}{{{p}_{m}}}+\varpi \sum\limits_{k=j+1}^{N_{\textcolor{black}{\text{U}}}}{{{p}_{k}}} \right){{l}_{i}}^{-\alpha }{{\left| {{h}_{i}} \right|}^{2}}+\frac{{{G}_{\text{sl}}}}{{{G}_{\text{ml}}}}{{\sum\limits_{\mathbf{s}\in \Phi_S }{\left\| {\mathbf{d}_{i}} \right\|}}^{-\alpha }}{{\left| {{g}_{{\mathbf{d}_{i}}}} \right|}^{2}}+{\bar{\sigma}^{2}}},
\end{aligned}
\label{Formula_SINR}
\end{equation}
\hrulefill
\vspace{-4mm}
\end{figure*}
where ${\bar{\sigma} }^{2} = \frac{{\sigma}^{2}}{L_{\text{pl}} P}$.
After mathematical reduction
in (\ref{Formula_SINR}), the SINR is independent of $P$, while it depends on $p_i$. 
For simplicity of presentation,
we denote the power of the intra-satellite interference signal as $I_{j,i}^{\text{intra}}=\left( \sum\limits_{m=1}^{j-1}{{{p}_{m}}}+\varpi \sum\limits_{k=j+1}^{N_{\textcolor{black}{\text{U}}}}{{{p}_{k}}} \right){{l}_{i}}^{-\alpha }{{\left| {{h}_{i}} \right|}^{2}}$, and the power of the inter-satellite interference signal as $I_{i}^{\text{inter}}=\frac{{{G}_{\text{sl}}}}{{{G}_{\text{ml}}}}{{\sum\limits_{\mathbf{s} \in \Phi_{\textcolor{black}{\text{S}}} }{\left\| {\mathbf{d}_{i}} \right\|}}^{-\alpha }}{{\left| {{g}_{{\mathbf{d}_{i}}}} \right|}^{2}}$.

The interference needs to be represented to further analyze its Laplace transform.
Denote by $r$ the distance between the projection point of the typical satellite $O$ and any of the interfering satellites. The nearest satellite to point $O$ is the typical satellite.
We use an approximation similar to that in ~\cite{liang2019non,ali2019downlink}, where typical UTs are regarded as being located at the point $O$, the center of the serving area\footnote{This can be applicable for NOMA in LEO multi-satellite networks. On the one hand, the radius of the service area is comparatively much smaller than that of the revolution radius of the satellite; on the other hand, typical UTs are uniformly distributed in the service area, so the distances from all interfering satellites to a typical UT can be approximated by the distances to the point $O$, considering the approximate mean of the distance between a typical UT and the center of the area.}. 
Hence, the distance $z$ between a typical UT and any one of the interfering satellites can be approximated by $\mathbb{E}\left\{z|r\right\}\approx r$. 
The inter-satellite interference is written as
\begin{equation}
\begin{aligned}
    I_{i}^{\text{inter}}
    & =\frac{{{G}_{\text{sl}}}}{{{G}_{\text{ml}}}}{{\sum\limits_{\mathbf{s}\in \Phi_{\textcolor{black}{\text{S}}} }{\left\| {\mathbf{d}_{i}} \right\|}}^{-\alpha }}{{\left| {{g}_{{\mathbf{d}_{i}}}} \right|}^{2}} \\
    & =\frac{{{G}_{\text{sl}}}}{{{G}_{\text{ml}}}}{{\sum\limits_{\begin{smallmatrix} 
 \mathbf{s}\in \Phi_{\textcolor{black}{\text{S}}}  \\ 
 {\mathbf{s}_{t}},t\ge 2 
\end{smallmatrix}}{\left\| {\mathbf{d}_{i}} \right\|}}^{-\alpha }}{{\left| {{g}_{{\mathbf{d}_{i}}}} \right|}^{2}}+\frac{{{G}_{\text{sl}}}}{{{G}_{\text{ml}}}}{{\sum\limits_{\begin{smallmatrix} 
 \mathbf{s}\in \Phi_{\textcolor{black}{\text{S}}}  \\ 
 {\mathbf{s}_{t}},t=1 
\end{smallmatrix}}{\left\| {\mathbf{d}_{i}} \right\|}}^{-\alpha }}{{\left| {{g}_{{\mathbf{d}_{i}}}} \right|}^{2}}.
\label{Formula_I_inter}
\end{aligned}
\end{equation}

\captionsetup{font={scriptsize}}
\begin{figure}[!t]
\begin{center}
\setlength{\abovecaptionskip}{+0.2cm}
\setlength{\belowcaptionskip}{-0.0cm}
\centering
  \includegraphics[width=5cm,height=4.8cm]{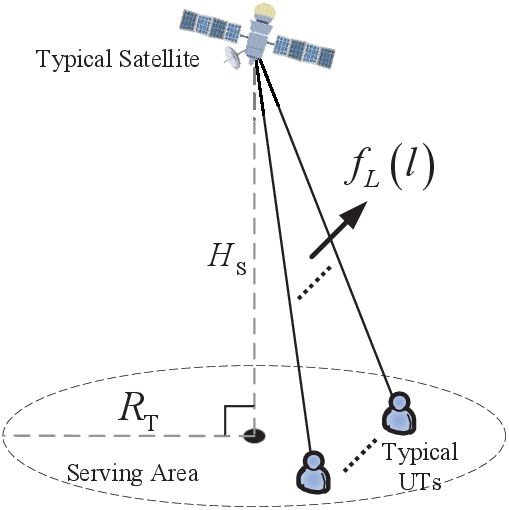}
\renewcommand\figurename{Fig.}
\caption{Distance from typical satellite to typical UTs.}
\label{fig2}
\end{center}
\vspace{-6mm}
\end{figure}

\vspace{-3mm}
\begin{lemma} 
The Laplace transform of the aggregated interference signal power is in (\ref{Formula_LT_I_inter}) at the top of the page after next.
\begin{figure*}[!ht]
\setlength{\abovecaptionskip}{-1.5cm}
\setlength{\belowcaptionskip}{-0.5cm}
\normalsize
\vspace{-2mm}
\begin{equation}
    {{\mathcal{L}}_{I_{i}^{\text{inter}}}}\left( s \right)\approx \exp \left( -\lambda_{\textcolor{black}{\text{S}}} \pi \frac{{R_{\textcolor{black}{\text{S}}}}}{{R_{\textcolor{black}{\text{E}}}}}{{\left( \frac{{{G}_{\text{sl}}}}{{{G}_{\text{ml}}}}\frac{s}{\kappa } \right)}^{\frac{2}{\alpha }}}\int_{{{\left( \frac{{{G}_{\text{sl}}}}{{{G}_{\text{ml}}}}\frac{s}{\kappa } \right)}^{-\frac{2}{\alpha }}}{{r}^{2}}}^{{{\left( \frac{{{G}_{\text{sl}}}}{{{G}_{\text{ml}}}}\frac{s}{\kappa } \right)}^{-\frac{2}{\alpha }}}{{R}_{\max }}^{2}}{\left[ 1-\frac{1}{{{\left( 1+{{u}^{-\frac{\alpha }{2}}} \right)}^{\kappa }}} \right]du} \right) {{\left( 1+\frac{{{G}_{\text{sl}}}}{{{G}_{\text{ml}}}}\frac{s}{{{r}^{\alpha }}\beta } \right)}^{-\kappa }}.
    \label{Formula_LT_I_inter}
\end{equation}
\hrulefill
\vspace{-2mm}
\end{figure*}
\end{lemma}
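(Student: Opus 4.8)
The plan is to evaluate $\mathcal{L}_{I_i^{\text{inter}}}(s) = \mathbb{E}[\exp(-s I_i^{\text{inter}})]$ directly from its definition, exploiting the decomposition in (\ref{Formula_I_inter}) into the nearest interfering satellite ($t=1$) and the remaining ones ($t\geq 2$), and using the distance approximation $\|\mathbf{d}_i\|\approx r$ already justified above. First I would replace each $\|\mathbf{d}_i\|$ by the distance from $O$ to the corresponding satellite and condition on the distance $r$ to the nearest interfering satellite. Since the fading powers $|g_{\mathbf{d}_i}|^2$ are i.i.d.\ Gamma$(\kappa,\beta)$ and independent of the point process, the expectation factorizes over satellites, and for a single satellite at distance $\rho$ the fading expectation is the Gamma Laplace transform, $\mathbb{E}_g[\exp(-s\frac{G_{\text{sl}}}{G_{\text{ml}}}\rho^{-\alpha}|g|^2)] = (1 + \frac{G_{\text{sl}}}{G_{\text{ml}}}\frac{s}{\rho^\alpha\beta})^{-\kappa}$. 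Applying this to the $t=1$ term at the conditioned distance $\rho=r$ yields immediately the trailing factor $(1 + \frac{G_{\text{sl}}}{G_{\text{ml}}}\frac{s}{r^\alpha\beta})^{-\kappa}$ of (\ref{Formula_LT_I_inter}).

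For the aggregate over $t\geq 2$, the key tool is the probability generating functional (PGFL) of the PPP: conditioned on the nearest interferer sitting at distance $r$, the remaining satellites form a PPP on the visible region whose distances exceed $r$. After carrying out the per-satellite fading expectation above, the PGFL gives $\exp(-\int (1 - (1+\frac{G_{\text{sl}}}{G_{\text{ml}}}\frac{s}{\rho^\alpha\beta})^{-\kappa})\,\Lambda(d\rho))$, where $\Lambda$ is the intensity of satellite distances measured from $O$. The step I expect to be the crux is obtaining $\Lambda$: the satellites live on a sphere of radius $R_S$ while distances are measured from $O$ at radius $R_E$, so I would parametrize a satellite by its central angle $\theta$, use the law of cosines $\rho^2 = R_S^2 + R_E^2 - 2R_S R_E \cos\theta$ to get $\rho\,d\rho = R_S R_E \sin\theta\,d\theta$, and combine it with the spherical area element $2\pi R_S^2 \sin\theta\,d\theta$ (azimuth integrated out) to obtain the linear intensity $\Lambda(d\rho) = 2\pi\lambda_S \frac{R_S}{R_E}\rho\,d\rho$. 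The integration runs from $\rho=r$ up to the maximum visible distance $R_{\max}$ fixed by the horizon.

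Finally I would substitute $u = (\frac{G_{\text{sl}}}{G_{\text{ml}}}\frac{s}{\kappa})^{-2/\alpha}\rho^2$ (using $\beta=\kappa=m$), which turns $\frac{G_{\text{sl}}}{G_{\text{ml}}}\frac{s}{\kappa}\rho^{-\alpha}$ into $u^{-\alpha/2}$ and maps $\rho\,d\rho$ to $\frac{1}{2}(\frac{G_{\text{sl}}}{G_{\text{ml}}}\frac{s}{\kappa})^{2/\alpha}du$; the prefactor $2\pi\lambda_S\frac{R_S}{R_E}$ then collapses to $\lambda_S\pi\frac{R_S}{R_E}(\frac{G_{\text{sl}}}{G_{\text{ml}}}\frac{s}{\kappa})^{2/\alpha}$, the integrand to $1-(1+u^{-\alpha/2})^{-\kappa}$, and the limits to $(\frac{G_{\text{sl}}}{G_{\text{ml}}}\frac{s}{\kappa})^{-2/\alpha}r^2$ and $(\frac{G_{\text{sl}}}{G_{\text{ml}}}\frac{s}{\kappa})^{-2/\alpha}R_{\max}^2$, exactly reproducing the exponential in (\ref{Formula_LT_I_inter}). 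The main obstacles are thus geometric rather than analytic: correctly deriving the distance intensity $\Lambda$ on the spherical shell and pinning down $R_{\max}$, together with carefully tracking the conditioning that isolates the $t=1$ satellite from the $t\geq 2$ aggregate so that no double counting occurs. The $\approx$ in the statement is inherited from the approximation $\|\mathbf{d}_i\|\approx r$ and from treating the visible-cap boundary $R_{\max}$ as deterministic.
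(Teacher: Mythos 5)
Your proposal follows essentially the same route as the paper's Appendix A: the same split of the interference into the nearest interfering satellite (handled via the Gamma/Nakagami MGF with the approximation $\mathbb{E}\{z\,|\,r\}\approx r$) and the $t\ge 2$ aggregate (handled via the PGFL of the SPPP with linear distance intensity $2\pi\lambda\frac{R_S}{R_E}\rho\,d\rho$, which the paper states as $\partial|\mathcal{V}_r|/\partial r$ and you derive from the law of cosines), followed by the identical change of variable $u=(\frac{G_{\text{sl}}}{G_{\text{ml}}}\frac{s}{\kappa})^{-2/\alpha}\rho^{2}$. The argument is correct and matches the paper's proof step for step.
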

\vspace{-7mm}
\begin{proof}
    See Appendix A.
\end{proof}

\section{Performance Analysis}

This section provides the coverage probabilities of typical UTs based on the MSP and ISINR ordering. Considering the location randomness of the UTs and LEO satellites, related distance distributions should be first provided. Then, the expressions for coverage probabilities are derived by applying order statistics for PDFs of the link quality. Finally, a brief discussion of three types of PA schemes is shown.

\vspace{-4mm}

\subsection{Distance Distribution}
\subsubsection{Typical Satellite to Typical UT}
The circular serving area of the typical satellite is depicted in Fig. \ref{fig2}. The PDF of the link distance $l$ between the typical UT and the typical satellite is given by 
\begin{equation}
    f_{L}(l) = \begin{cases} \frac{2l}{{R_{\textcolor{black}{\text{T}}}}^{2}}, & {{L}_{\min }}\le l\le {{L}_{\max }} \\ 0, & \text{otherwise} \end{cases}
\end{equation}
where ${{L}_{\min }}={{H}_{S}}$, ${{L}_{\max }}=\sqrt{{H_{\textcolor{black}{\text{S}}}}^{2}+{R_{\textcolor{black}{\text{T}}}}^{2}}$, and $R_{\textcolor{black}{\text{T}}}$ is the radius of the serving area~\cite[Lemma 1]{shang2024multi}. The corresponding cumulative distribution function (CDF) is then calculated as
\begin{equation}
    {{F}_{L}}\left( l \right)=\int_{{{L}_{\min }}}^{l}{\frac{2t}{{R_{\textcolor{black}{\text{T}}}}^{2}}dt}=\frac{{{l}^{2}}-{H_{\textcolor{black}{\text{S}}}}^{2}}{{R_{\textcolor{black}{\text{T}}}}^{2}},    
\end{equation}
for ${{L}_{\min }}\le l\le {{L}_{\max }}$.

\subsubsection{Typical UT to interfering satellite}
Next, we present the PDF of distance $r$. 
As shown in Fig. \ref{fig3}, define a spherical cap $\mathcal{V}$ from the point $O$. The minimum distance to an interfering satellite is ${{R}_{\min }}={H_{\textcolor{black}{\text{S}}}}={R_{\textcolor{black}{\text{S}}}}-{R_{\textcolor{black}{\text{E}}}}$, while the maximum distance is ${{R}_{\max }}=\sqrt{{R_{\textcolor{black}{\text{S}}}}^{2}-{R_{\textcolor{black}{\text{E}}}}^{2}}$. 
According to~\cite[Lemma 1]{park2022tractable}, the probability of having more than one satellite, i.e., at least one interfering satellite herein, in $\mathcal{V}$ is given by
\begin{equation}
    \mathbb{P}\left[ \Phi \left( \mathcal{V} \right)>1 \right]=1-{{e}^{-\lambda_{\textcolor{black}{\text{S}}} 2\pi \left( {R_{\textcolor{black}{\text{S}}}}-{R_{\textcolor{black}{\text{E}}}} \right){R_{\textcolor{black}{\text{S}}}}}}.
    \label{Formula_prob_morethanone}
\end{equation}
Moreover, conditioned on having at least one interfering satellite, PDF of distance $r$ is~\cite[Lemma 2]{park2022tractable} 
\begin{equation}
    {{f}_{R\left| \Phi \left( \mathcal{V} \right)>1 \right.}}\left( r \right)=2\pi \lambda_{\textcolor{black}{\text{S}}} \frac{R_{\textcolor{black}{\text{S}}}}{{R_{\textcolor{black}{\text{E}}}}}\frac{{{e}^{\lambda_{\textcolor{black}{\text{S}}} \pi \frac{R_{\textcolor{black}{\text{S}}}}{{R_{\textcolor{black}{\text{E}}}}}\left( {R_{\textcolor{black}{\text{S}}}}^{2}-{R_{\textcolor{black}{\text{E}}}}^{2} \right)}}}{{{e}^{2\lambda_{\textcolor{black}{\text{S}}} \pi {R_{\textcolor{black}{\text{S}}}}\left( {R_{\textcolor{black}{\text{S}}}}-{R_{\textcolor{black}{\text{E}}}} \right)}}-1} r{{e}^{-\lambda_{\textcolor{black}{\text{S}}} \pi \frac{{{R}_{S}}}{{R_{\textcolor{black}{\text{E}}}}}{{r}^{2}}}},
    \label{Formula_PDF_r}
\end{equation}
for ${{R}_{\min }}\le r\le {{R}_{\max }}$.
Combining (\ref{Formula_prob_morethanone}) and (\ref{Formula_PDF_r}), the unconditional PDF of distance $r$ is given by (\ref{Formula_PDF_r_morethanone}) at the top of the next page.
\begin{figure*}[!ht]
\setlength{\abovecaptionskip}{-1.5cm}
\setlength{\belowcaptionskip}{-0.5cm}
\normalsize
\begin{equation}
    {{f}_{R}}\left( r \right) 
    = \begin{cases} 2\pi \lambda_{\textcolor{black}{\text{S}}} \frac{{R_{\textcolor{black}{\text{S}}}}}{{R_{\textcolor{black}{\text{E}}}}}\frac{{{e}^{\lambda_{\textcolor{black}{\text{S}}} \pi \frac{{R_{\textcolor{black}{\text{S}}}}}{{R_{\textcolor{black}{\text{E}}}}}\left( {R_{\textcolor{black}{\text{S}}}}^{2}-{R_{\textcolor{black}{\text{E}}}}^{2} \right)}}}{{{e}^{2 \lambda_{\textcolor{black}{\text{S}}} \pi {R_{\textcolor{black}{\text{S}}}}\left( {R_{\textcolor{black}{\text{S}}}}-{R_{\textcolor{black}{\text{E}}}} \right)}}-1} r{{e}^{-\lambda_{\textcolor{black}{\text{S}}} \pi \frac{{R_{\textcolor{black}{\text{S}}}}}{{R_{\textcolor{black}{\text{E}}}}}{{r}^{2}}}}\left[ 1-{{e}^{-\lambda_{\textcolor{black}{\text{S}}} 2\pi \left( {R_{\textcolor{black}{\text{S}}}}-{R_{\textcolor{black}{\text{E}}}} \right){R_{\textcolor{black}{\text{S}}}}}} \right], & {{R}_{\min }}\le r\le {{R}_{\max }}, \\ 0, & \text{otherwise}. \end{cases}
\label{Formula_PDF_r_morethanone}
\end{equation}
\hrulefill
\vspace{-4mm}
\end{figure*}

\captionsetup{font={scriptsize}}
\begin{figure}[tp]
\begin{center}
\setlength{\abovecaptionskip}{+0.2cm}
\setlength{\belowcaptionskip}{-0.0cm}
\centering
  \includegraphics[width=8.2cm,height=5.2cm]{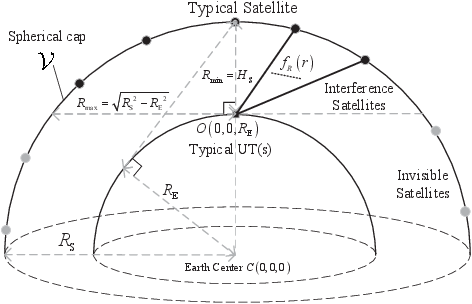}
\renewcommand\figurename{Fig.}
\caption{Distance from typical UTs to satellites.}
\label{fig3}
\end{center}
\vspace{-6mm}
\end{figure}

\subsection{Ordering-Based Coverage Probability}

For a single UT, if the SINR at UT$_i$ is higher than a target SINR threshold $\theta_i$, i.e., $\text{SINR}_i > \theta_i$ with $1\leq i\leq N_{\textcolor{black}{\text{U}}}$, the UT$_i$ is assumed to be in coverage, which is written as the event $\Lambda_i^0 = \left\{\text{SINR}_i > \theta_i \right\}$.
However, there might be more than one UT which is served simultaneously. 
To decode its intended message, the UT$_i$ needs to successfully decode the messages intended for all other UTs that are weaker than itself. For $i\leq j\leq N_{\textcolor{black}{\text{U}}}$, take $\theta_j$ for the target SINR threshold corresponding to the target rate associated with the message for UT$_j$. The joint coverage event of UT$_i$ is defined as
\begin{align}
    \Lambda_{i}
    & =\bigcap\limits_{j=i}^{N_{\textcolor{black}{\text{U}}}}{\left\{ \text{SINR}_{j}^{i}=\frac{{{p}_{j}}{{l}_{i}}^{-\alpha }{{\left| {{h}_{i}} \right|}^{2}}}{I_{j,i}^{\text{intra}}+I_{i}^{\text{inter}}+{\bar{\sigma}^{2}}}>{{\theta }_{j}} \right\}} \\
    & \overset{\left( a \right)}{\mathop{=}}\bigcap\limits_{j=i}^{N_{\textcolor{black}{\text{U}}}}{\left\{ {{\left| {{h}_{i}} \right|}^{2}}>{{l}_{i}}^{\alpha }\left( I_{i}^{\text{inter}}+{\bar{\sigma}^{2}} \right)\frac{{{\theta }_{j}}}{{{{\tilde{p}}}_{j}}} \right\}},
    \label{Fomula_CP_joint}
\end{align}
where $(a)$ is obtained via mathematical manipulations, 
and ${{\tilde{p}}_{j}}={{p}_{j}}-{{\theta }_{j}}\left( \sum\limits_{m=1}^{j-1}{{{p}_{m}}}+\varpi \sum\limits_{k=j+1}^{N_{\textcolor{black}{\text{U}}}}{{{p}_{k}}} \right)$.
To further simplify (\ref{Fomula_CP_joint}), 
denoting ${{Q}_{i}}=\underset{i\le j\le N}{\mathop{\max }}\,\frac{{{\theta }_{j}}}{{{{\tilde{p}}}_{j}}}$, then (\ref{Fomula_CP_joint}) can be written equivalently as $\Lambda_{_{i}}
    =\left\{ {{\left| {{h}_{i}} \right|}^{2}}>{{l}_{i}}^{\alpha }\left( I_{i}^{\text{inter}}+{\bar{\sigma}^{2}} \right){{Q}_{i}} \right\}$ \cite{ali2019downlink}.

According to the above derivation, the impact of intra-satellite interference is a reduction in the PA coefficient of UT$_j$. This reduction and the corresponding ${\tilde{p}}_{j}$ have no relation to the transmission rate of the message to be decoded. However, the requirement to be satisfied is ${\tilde{p}}_{j} > 0$; otherwise, the SIC mechanism cannot be successfully operated\footnote{The requirement of ${\tilde{p}}_{j} > 0$ can be regarded to as the \textit{NOMA necessary condition} for coverage\cite[Remark 3]{ali2019downlink}. When this condition is not satisfied by any one of the typical UTs, they will not be served by the typical satellite in a NOMA manner, and thus all typical UTs are assumed to be out of coverage.}. 
However, the inter-satellite interference $I_{i}^{\text{inter}}$ is affected by $G_\text{ml}$, whose influence remains unknown.
By using the distribution of the unordered link quality statistics as well as the theory of order statistics \cite{david2004order}, two ordering schemes are derived as follows.

\subsubsection{MSP Ordering}
In the MSP ordering scheme, typical UTs are ordered with regard to the ascending ordered link distance $l_i$ for $1\leq i\leq N_{\textcolor{black}{\text{U}}}$. 
With the distribution of the unordered link distance $l$ with $L_\text{min} \leq l\leq L_\text{max}$, the PDF of the ordered link distance is given by
\begin{equation}
\begin{aligned}
    {{f}_{{{L}_{i}}}}\left( l \right)
     & =\left( \begin{matrix}
       N_{\textcolor{black}{\text{U}}} - 1  \\
       i-1  \\
    \end{matrix} \right){{\left( \frac{{{l}^{2}}-{H_{\textcolor{black}{\text{S}}}}^{2}}{{R_{\textcolor{black}{\text{T}}}}^{2}} \right)}^{i-1}} \\
    & \quad \, \cdot{{\left( 1-\frac{{{l}^{2}}-{H_{\textcolor{black}{\text{S}}}}^{2}}{{R_{\textcolor{black}{\text{T}}}}^{2}} \right)}^{N_{\textcolor{black}{\text{U}}} - i}} \frac{2l}{{R_{\textcolor{black}{\text{T}}}}^{2}} N_{\textcolor{black}{\text{U}}},
\end{aligned}
\end{equation}
where in the binomial theorem $\left( \begin{matrix}
   p  \\
   q  \\
\end{matrix} \right)=\frac{p!}{q!\left( p-q \right)!}$.

\begin{theorem}
The coverage probability of the typical UT$_i$ based on MSP is approximated as (\ref{Formula_CP_MSP}) at the top of the page after next,
where ${{R}_{\min }}={{L}_{\min }}={H_{\textcolor{black}{\text{S}}}}$, ${{R}_{\max }}=\sqrt{{R_{\textcolor{black}{\text{S}}}}^{2}-{R_{\textcolor{black}{\text{E}}}}^{2}}$, and ${{L}_{\max }}=\sqrt{{H_{\textcolor{black}{\text{S}}}}^{2}+{R_{\textcolor{black}{\text{T}}}}^{2}}$.
\begin{figure*}[!ht]
\setlength{\abovecaptionskip}{-1.5cm}
\setlength{\belowcaptionskip}{-0.5cm}
\normalsize
\vspace{-2mm}
\begin{equation}
    \mathbb{P}_{\text{M}}\left( {\Lambda_{i}} \right)\approx \int_{{{L}_{\min }}}^{{{L}_{\max }}}{\int_{{{R}_{\min }}}^{{{R}_{\max }}}{\sum\limits_{k=0}^{\kappa -1}{\frac{{{\beta }^{k}}{{l}^{\alpha k}}{{Q}_{i}}^{k}}{k!}} {{\left( -1 \right)}^{k}}\frac{{{d}^{k}}{{\mathcal{L}}_{I_{i}^{\text{inter}}+{\bar{\sigma}^{2}}}}\left( s \right)}{d{{s}^{k}}}\left| _{s=\beta {{l}^{\alpha }}{{Q}_{i}}} \right. {{f}_{R}}\left( r \right)dr} {{f}_{{{L}_{i}}}}\left( l \right)dl}.
    \label{Formula_CP_MSP}
\end{equation}
\hrulefill
\end{figure*}
\end{theorem}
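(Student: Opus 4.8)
The plan is to reduce the joint coverage probability to an expectation of a Gamma survival function, taken over the interference field and the distance distributions. Starting from the simplified event $\Lambda_i = \{|h_i|^2 > l_i^{\alpha}(I_i^{\text{inter}}+\bar\sigma^2)Q_i\}$ established in (\ref{Fomula_CP_joint}), I would first exploit the defining feature of MSP ordering: because the UTs are ranked purely by their link distance $l_i$ and not by fading, the desired fading power $|h_i|^2$ is statistically independent of the ordered distance $L_i$ and still obeys the Gamma$(\kappa,\beta)$ law of (2). This is precisely what separates the MSP case from the ISINR case and makes the unordered fading distribution reusable after ordering. Conditioning on $L_i=l$ and on the aggregate interference-plus-noise $X = I_i^{\text{inter}}+\bar\sigma^2$, the conditional coverage probability is simply $\mathbb{P}\!\left(|h_i|^2 > l^{\alpha} X Q_i\right)$.

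Next I would invoke the closed form of the Gamma complementary CDF for integer shape parameter $\kappa=m$, namely $\mathbb{P}(|h_i|^2>t)=e^{-\beta t}\sum_{k=0}^{\kappa-1}(\beta t)^k/k!$, and substitute the threshold $t = l^{\alpha} X Q_i$. This turns the conditional coverage probability into a finite sum of terms of the form $\tfrac{(\beta l^{\alpha} Q_i)^k}{k!}\,X^k e^{-\beta l^{\alpha} Q_i X}$. The crux is then removing the randomness of the interference $X$. Recognizing $e^{-\beta l^{\alpha} Q_i X}$ as a Laplace kernel evaluated at $s=\beta l^{\alpha}Q_i$, I would apply the moment identity $\mathbb{E}[X^k e^{-sX}] = (-1)^k \tfrac{d^k}{ds^k}\mathcal{L}_X(s)$, which converts each weighted moment into a derivative of the Laplace transform of $X$. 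Since the noise $\bar\sigma^2$ is deterministic, $\mathcal{L}_X(s)=\mathcal{L}_{I_i^{\text{inter}}+\bar\sigma^2}(s)=e^{-s\bar\sigma^2}\mathcal{L}_{I_i^{\text{inter}}}(s)$, the interference factor being supplied by Lemma~1. Collecting terms reproduces exactly the summand $\sum_{k=0}^{\kappa-1}\tfrac{\beta^k l^{\alpha k} Q_i^k}{k!}(-1)^k \tfrac{d^k}{ds^k}\mathcal{L}_{I_i^{\text{inter}}+\bar\sigma^2}(s)\big|_{s=\beta l^{\alpha} Q_i}$ appearing in (\ref{Formula_CP_MSP}).

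Finally I would de-condition. The Laplace transform of Lemma~1 is conditioned on the nearest interfering satellite lying at distance $r$ (it carries both the $s$-dependent integral limits and the separate $(1+\tfrac{G_\text{sl}}{G_\text{ml}}\tfrac{s}{r^\alpha\beta})^{-\kappa}$ factor), so after performing the differentiation I average over $r$ with $f_R(r)$ from (\ref{Formula_PDF_r_morethanone}), and then average over the link distance $l$ with the ordered PDF $f_{L_i}(l)$. This yields the nested double integral of (\ref{Formula_CP_MSP}). The main obstacle I anticipate is not the algebra of the Gamma tail, which is routine, but justifying the order of operations around the interference: one must keep the conditioning on $r$ intact while applying the moment-to-derivative identity, and only integrate out $r$ afterward, since the derivative $\tfrac{d^k}{ds^k}\mathcal{L}_{I_i^{\text{inter}}}(s)$ acts on an expression in which $s$ appears inside the integration limits as well as in the pre-factor. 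I would also flag that the approximation sign in (\ref{Formula_CP_MSP}) inherits two sources of error, namely the Gamma approximation (2) to the squared Nakagami-$m$ envelope and the $z\approx r$ localization used to obtain Lemma~1, neither of which is exact but both of which are controlled in the stated regime.
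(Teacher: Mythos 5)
Your proposal is correct and follows essentially the same route as the paper's Appendix B: apply the integer-shape Gamma CCDF $\mathbb{P}[|h_i|^2\ge x]=e^{-\beta x}\sum_{k=0}^{\kappa-1}(\beta x)^k/k!$ to the simplified event $\Lambda_i$, convert the resulting moments $\mathbb{E}[X^k e^{-sX}]$ into Laplace-transform derivatives at $s=\beta l^{\alpha}Q_i$, and then average over $r$ with $f_R$ and over $l$ with the ordered PDF $f_{L_i}$. Your added observations --- that MSP ordering leaves the fading law unordered, and that the approximation inherits error from both the Gamma fit and the $z\approx r$ localization in Lemma~1 --- are correct refinements of points the paper leaves implicit.
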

\begin{proof}
    Please see Appendix B.
\end{proof}

\subsubsection{ISINR Ordering}
In the ISINR ordering scheme, typical UTs are indexed in terms of the descending ordered $Z_i$ for $1\leq i\leq N_{\textcolor{black}{\text{U}}}$. The unordered $Z=\frac{{{l}^{-\alpha }}{{\left| h \right|}^{2}}}{{{I}^{\text{inter}}}+{\bar{\sigma}^{2}}}$ will be utilized similarly as in the MSP ordering scheme.
\begin{theorem}
The coverage probability of the typical UT$_i$ based on ISINR is approximated as (\ref{Formula_CP_ISINR}) at the top of the next page.
\begin{figure*}[!ht]
\setlength{\abovecaptionskip}{-1.5cm}
\setlength{\belowcaptionskip}{-0.5cm}
\normalsize
\vspace{-2mm}
\begin{equation}
    \mathbb{P}_{\text{I}}\left( {\Lambda_{i}} \right)\approx 1-\sum\limits_{k={N_{\textcolor{black}{\text{U}}}}+1-i}^{N_{\textcolor{black}{\text{U}}}}{\left( \begin{matrix}
   N_{\textcolor{black}{\text{U}}}  \\
   k  \\
\end{matrix} \right){{\left[ {{F}_{Z}}\left( {Q}_{i} \right) \right]}^{k}}{{\left[ 1-{{F}_{Z}}\left( {Q}_{i} \right) \right]}^{N_{\textcolor{black}{\text{U}}}-k}}},
\label{Formula_CP_ISINR}
\end{equation}
where
\begin{equation}
    {{F}_{Z}}\left( x \right)\approx 1-\int_{{{L}_{\min }}}^{{{L}_{\max }}}{\int_{{{R}_{\min }}}^{{{R}_{\max }}}{\sum\limits_{k=0}^{\kappa -1}{\frac{{{\left( \beta x{{l}^{\alpha }} \right)}^{k}}}{k!}}{{\left( -1 \right)}^{k}}\frac{{{d}^{k}}{{\mathcal{L}}_{{{I}^{\text{inter}}}+{\bar{\sigma}^{2}}}}\left( \beta x{{l}^{\alpha }} \right)}{d{{s}^{k}}}} {{f}_{R}}\left( r \right) {{f}_{L}}\left( l \right)drdl}.
\label{Formula_F_Z}
\end{equation}
\hrulefill
\vspace{-4mm}
\end{figure*}
\end{theorem}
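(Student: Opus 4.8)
The plan is to combine the order-statistics machinery already exercised for MSP ordering in Theorem 1 with a distributional identity for the unordered ISINR. First I would recast the joint coverage event. Starting from the reduced form $\Lambda_i = \{|h_i|^2 > l_i^{\alpha}(I_i^{\text{inter}} + \bar{\sigma}^2)Q_i\}$ with $Q_i = \max_{i \leq j \leq N_U}\theta_j/\tilde{p}_j$ established just above (\ref{Fomula_CP_joint}), I would divide both sides by $l_i^{\alpha}(I_i^{\text{inter}}+\bar{\sigma}^2)$ and recognize that this is exactly $\Lambda_i = \{Z_i > Q_i\}$, where $Z_i = l_i^{-\alpha}|h_i|^2/(I_i^{\text{inter}}+\bar{\sigma}^2)$ is the very quantity by which the UTs are indexed. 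Hence the coverage probability reduces to the probability that the $i$-th largest ISINR exceeds the single threshold $Q_i$.

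Next I would apply order statistics. Treating the $N_U$ ISINRs as i.i.d. copies of an unordered $Z$ with CDF $F_Z$, the event that the $i$-th largest exceeds $Q_i$ is the complement of the event that at most $i-1$ of the samples exceed $Q_i$, equivalently that at least $N_U - i + 1$ of them fall at or below $Q_i$. The count of samples not exceeding $Q_i$ is Binomial with parameters $N_U$ and $F_Z(Q_i)$, so summing its mass from $N_U + 1 - i$ to $N_U$ and subtracting from unity yields exactly (\ref{Formula_CP_ISINR}). This parallels the binomial bookkeeping of the MSP case and reuses the order-statistics theory of \cite{david2004order}.

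The remaining and most substantial step is to evaluate $F_Z(x)$. I would write $1 - F_Z(x) = \mathbb{P}(|h|^2 > x l^{\alpha}(I^{\text{inter}} + \bar{\sigma}^2))$ and condition on the link distance $l$ and on $Y := I^{\text{inter}} + \bar{\sigma}^2$. Since $|h|^2$ is Gamma-distributed with integer shape $\kappa$ and inverse scale $\beta$, its complementary CDF is the finite sum $e^{-\beta y}\sum_{k=0}^{\kappa-1}(\beta y)^k/k!$; substituting $y = x l^{\alpha} Y$ and taking the expectation over $Y$ term by term, each summand carries a factor $Y^k e^{-sY}$ with $s = \beta x l^{\alpha}$, which I would replace using the moment identity $\mathbb{E}[Y^k e^{-sY}] = (-1)^k \frac{d^k}{ds^k}\mathcal{L}_Y(s)$. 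Because $\bar{\sigma}^2$ is deterministic, $\mathcal{L}_Y(s) = e^{-s\bar{\sigma}^2}\mathcal{L}_{I^{\text{inter}}}(s)$, so the derivatives act on the interference Laplace transform furnished by Lemma 1. Averaging the resulting expression over $l \sim f_L$ and over the nearest-interferer distance $r \sim f_R$ produces (\ref{Formula_F_Z}).

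The principal obstacle is the independence assumption underpinning the order-statistics step: all typical UTs share a common inter-satellite interference field, since they are approximated as co-located at the projection point $O$, so the $Z_i$ are correlated rather than genuinely i.i.d. Treating them as independent draws from $F_Z$ is precisely where the ``$\approx$'' in (\ref{Formula_CP_ISINR}) enters, and it is the step that would demand the most care to justify rigorously. A second, milder source of approximation is inherited from Lemma 1 together with the point-$O$ distance surrogate $\mathbb{E}\{z\,|\,r\} \approx r$. By contrast, the term-by-term differentiation, the interchange of expectation with the finite sum, and the final double integration are routine once $\kappa$ is taken to be a positive integer.
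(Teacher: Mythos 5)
Your argument is correct and follows essentially the same route as the paper's Appendix C: you reduce the coverage event to $\{Z_i > Q_i\}$, obtain the ordered CDF via the binomial order-statistics formula treating the $N_U$ ISINRs as draws from the unordered $F_Z$, and derive $F_Z$ itself from the Gamma CCDF finite sum combined with the Laplace-transform derivative identity $\mathbb{E}[Y^k e^{-sY}] = (-1)^k \frac{d^k}{ds^k}\mathcal{L}_Y(s)$ and the expectations over $l$ and $r$. Your explicit flagging of the i.i.d.\ assumption on the $Z_i$ (which share a common interference field) as the source of the ``$\approx$'' is a point the paper leaves implicit, but it does not change the derivation.
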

\begin{proof}
    Please see Appendix C.
\end{proof}

\subsection{User Fairness-Based PA}
For a target SINR threshold $\theta_i$ and its normalized transmission rate of $\text{log}(1+\theta_i)$, the SE, data rate and sum SE of the typical UTs are expressed as 
\begin{subequations}
\begin{equation}
    \text{SE}_i = \mathbb{P}(\Lambda_i)\text{log}(1+\theta_i),
\end{equation}
\begin{equation}
    \text{R}_i = B\cdot \text{SE}_i = B\cdot \mathbb{P}(\Lambda_i)\text{log}(1+\theta_i),
\end{equation}
\begin{equation}
    \text{SE}_{\text{sum}} = \sum_{i=1}^{N_{\textcolor{black}{\text{U}}}} \text{SE}_i .
\end{equation}
\end{subequations}

\subsubsection{Equal Transmitted Power Allocation (ETPA)}
The ETPA scheme aims to ensure user fairness at the transmitter by assigning an equal share of the total transmit power to each served UT.
Specifically, with a total of $N_{\textcolor{black}{\text{U}}}$ UTs in the service area, $p_i=\frac{1}{N_{\textcolor{black}{\text{U}}}}$ of the total power $P$ will be used for each UT. This scheme reaches the goal of fairness in the initial stage with low complexity.

\subsubsection{Equal Received Power Allocation (ERPA)}
The ERPA scheme accounts for channel conditions and aims to ensure that all UTs receive equal signal power\footnote{In practical deployments, the implementation of the proposed PA algorithms, including the ERPA scheme, requires accurate knowledge of the received power at the satellite from each UT. This can be achieved through a return/uplink channel in which each UT reports its channel state information (CSI) or received signal strength indicator (RSSI) to the satellite. In the case of mobile UTs, this feedback must be performed periodically to adapt to user mobility and channel variation.

The feasibility of such a feedback mechanism is supported by existing satellite communication standards, e.g., DVB-S2X, 5G NTN, which allow for uplink signaling and measurement reporting. However, introducing a return channel may increase system overhead and latency. Although the focus of this work is on the downlink PA strategy, we recognize that a complete system-level performance evaluation should also include the cost and delay associated with CSI feedback, particularly in mobile scenarios.}.
Assume that each UT receives data at an equal power $C$, two requirements can be represented as
\begin{subequations}
\begin{equation}
    \mathrm{R}1: p_i l_0 {l_i}^{-\alpha} \mathbb{E}\left\{|h|^2\right\} = C,
\end{equation}
\begin{equation}
    \mathrm{R}2: \quad\quad\quad\quad\,\, \sum_{i=1}^{N_{\textcolor{black}{\text{U}}}}p_i = 1,
\end{equation}
\end{subequations}
where ${l_0}=L_\text{pl}$, ${l_0} {l_i}^{-\alpha}$ is the path-loss on the distance, and $|h|$ is the small-scale fading coefficient. Then, we have 
\begin{equation}
    C = \frac{1}{\sum_{i=1}^{N_{\textcolor{black}{\text{U}}}} \frac{1}{ l_0 {l_i}^{-\alpha} \mathbb{E}\left\{|h|^2\right\} }}.
\end{equation}

\subsubsection{Fixed Power Allocation (FPA)}
This paper also studies schemes with FPA coefficients. With properly selected PA coefficients, the UTs' sum SE can be maximized. However, in order to maximize the sum SE, the FPA scheme may not be as fair as the ETPA and ERPA schemes.
By comparing FPA schemes under different sets of coefficients, valuable insights will be provided for the design of NOMA in downlink LEO multi-satellite networks.

Herein, the MSP and ISINR ordering methods, together with ETPA, ERPA, and FPA, are adopted to ensure analytical tractability while reflecting representative NOMA strategies\footnote{In the literature there are many other ordering and PA schemes, each with its own optimization concern, such as ordered PA~\cite{saito2013non}, max-min rate fairness~\cite{cui2016novel}, proportional fairness (PF)~\cite{liu2016proportional}, sum-rate maximization~\cite{sun2016optimal}, etc. Although they could further enhance performance, their inclusion would require iterative optimization and significantly increase analytical complexity, which lies beyond the current scope.}.

\subsection{OMA-based Counterparts}
We consider TDMA for OMA-based LEO multi-satellite networks as a benchmark scheme. In this subsection, we briefly describe how OMA-based counterparts are measured quantitatively, following similar OMA strategies adopted in \cite[Lemma 5]{ali2019downlink} and \cite[Lemma 6]{ali2019downlink}.
First, the SINR for a typical UT$_i$ in the serving area is computed by removing the term of intra-satellite interference, i.e., removing $I_{j,i}^{\text{intra}}=\left( \sum\limits_{m=1}^{j-1}{{{p}_{m}}}+\varpi \sum\limits_{k=j+1}^{N_{\textcolor{black}{\text{U}}}}{{{p}_{k}}} \right){{l}_{i}}^{-\alpha }{{\left| {{h}_{i}} \right|}^{2}}$ in (\ref{Formula_SINR}). 
Thereby, the SINR for UT$_i$ is written as 
\begin{equation}
\begin{aligned}
    \text{SINR}_{\text{OMA}}^{i}
     & = \frac{{{l}_{i}}^{-\alpha }{{\left| {{h}_{i}} \right|}^{2}}}
     { \frac{{{G}_{\text{sl}}}}{{{G}_{\text{ml}}}}{{\sum\limits_{\mathbf{s}\in \Phi_{\textcolor{black}{\text{S}}} }{\left\| {\mathbf{d}_{i}} \right\|}}^{-\alpha }}{{\left| {{g}_{{\mathbf{d}_{i}}}} \right|}^{2}}+{\bar{\sigma}^{2}} },
\end{aligned}
\end{equation}
and the coverage at UT$_i$ is denoted as $\tilde{\Lambda}_{i}=\left\{ \text{SINR}_{\text{OMA}}^{i} > \theta_i \right\}$.
Second, consider a uniform SINR threshold $\theta_i$. By replacing $Q_i$ in (\ref{Formula_CP_MSP}) and (\ref{Formula_CP_ISINR}) with $\theta_i$, the coverage probabilities 
$\mathbb{P}_{\text{M}}\left( {\Lambda_{i}} | \theta_i \right)$ and $\mathbb{P}_{\text{I}}\left( {\Lambda_{i}} | \theta_i \right)$ 
based on the MSP ordering and ISINR ordering can be measured, respectively. 

Finally, by multiplying a time-slot coefficient $t_i$, where $\sum_{i=1}^{N_{\textcolor{black}{\text{U}}}} t_i = 1$ and $t_i=\frac{1}{N_{\textcolor{black}{\text{U}}}}$ are assumed, the sum SE of typical UTs is expressed as
\begin{equation}
    \text{SE}_{\text{sum}}^{\text{OMA}} 
    = \sum_{i=1}^{N_{\textcolor{black}{\text{U}}}} t_i \mathbb{P}_{\text{M}/\text{I}}\left( {\Lambda_{i}} | \theta_i \right) \text{log}(1+\theta_i),
\end{equation}
where $\mathbb{P}_{\text{M}/\text{I}}\left( {\Lambda_{i}} | \theta_i \right)$ is achieved by replacing $Q_i$ in (\ref{Formula_CP_MSP}) and (\ref{Formula_CP_ISINR}) with $\theta_i$.

\vspace{-3mm}

\section{Particular Cases}
This section derives some valid but closed-form expressions for the coverage probability to obtain greater tractability under particular channel conditions. 
We concentrate on the first-order channel, i.e., the Rayleigh fading channel, and the second-order channel, i.e., the non-Rayleigh fading channel with LoS components, under combinations of ordering and PA schemes.

\vspace{-3mm}

\subsection{Coverage on First-Order Channel}
\begin{corollary}
When $\kappa=1$, the coverage probability of typical UT$_i$ based on MSP is approximated as (\ref{Formula_CP_MSP_kappa_1_1}) at the top of the page.
\begin{figure*}[!ht]
\setlength{\abovecaptionskip}{-1.5cm}
\setlength{\belowcaptionskip}{-0.5cm}
\normalsize
\begin{equation}
\begin{aligned}
\mathbb{P}_{\text{M}}\left( {\Lambda_{i};\kappa=1} \right)
\approx \int_{{{L}_{\min }}}^{{{L}_{\max }}} \int_{{R}_{\min }}^{{R}_{\max }}
& \exp \left( -\lambda_{\textcolor{black}{\text{S}}} \pi \frac{R_{\textcolor{black}{\text{S}}}}{R_{\textcolor{black}{\text{E}}}}\cdot \left\{ \begin{aligned}
  & {{R}_{\max }}^{2}\left[ 1-\eta \left( s,{{R}_{\max }};\alpha ,\kappa ,{{G}_{\text{sl}}},{{G}_{\text{ml}}} \right) \right] \\ 
 & -{{r}^{2}}\left[ 1-\eta \left( s,r;\alpha ,\kappa ,{{G}_{\text{sl}}},{{G}_{\text{ml}}} \right) \right] \\ 
\end{aligned} \right\} -s\bar{\sigma}^{2} \right) \\
& \cdot {\left( 1+\frac{{{G}_{\text{sl}}}}{{{G}_{\text{ml}}}}\frac{s}{{{r}^{\alpha }}\beta } \right)}^{-\kappa } 
\left| _{s=\beta {{l}^{\alpha }}{{Q}_{i}}} \right.{{f}_{R}}\left( r \right)dr{{f}_{{{L}_{i}}}}\left( l \right)dl. 
\end{aligned}
\label{Formula_CP_MSP_kappa_1_1}
\end{equation}
\hrulefill
\vspace{-2mm}
\end{figure*}
\end{corollary}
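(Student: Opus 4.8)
The plan is to obtain Corollary 1 as a direct specialization of Theorem 1 to $\kappa = 1$, where essentially all of the work lies in collapsing the finite sum and simplifying the inner Laplace-transform factor into closed form. First I would observe that setting $\kappa = 1$ reduces the sum $\sum_{k=0}^{\kappa-1}$ in (\ref{Formula_CP_MSP}) to its single $k=0$ term, so that the integrand becomes simply $\mathcal{L}_{I_{i}^{\text{inter}}+\bar{\sigma}^{2}}(s)$ evaluated at $s=\beta l^{\alpha}Q_{i}$. Since the thermal-noise contribution $\bar{\sigma}^{2}$ is deterministic, the transform factorizes as $\mathcal{L}_{I_{i}^{\text{inter}}+\bar{\sigma}^{2}}(s)=e^{-s\bar{\sigma}^{2}}\,\mathcal{L}_{I_{i}^{\text{inter}}}(s)$, which already supplies the $-s\bar{\sigma}^{2}$ term inside the exponential of (\ref{Formula_CP_MSP_kappa_1_1}). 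The remaining task is then to take $\mathcal{L}_{I_{i}^{\text{inter}}}(s)$ from Lemma 1 at $\kappa=1$ and rewrite its exponent in the boundary-term form involving $\eta$.

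Next I would carry out a change of variables in the integral appearing in the exponent of (\ref{Formula_LT_I_inter}). Writing $B=\frac{G_{\text{sl}}}{G_{\text{ml}}}\frac{s}{\kappa}$, the prefactor $B^{2/\alpha}$ cancels against the Jacobian of the substitution $u=B^{-2/\alpha}w$, because $(B^{-2/\alpha}w)^{-\alpha/2}=B\,w^{-\alpha/2}$; this sends the $s$-dependent limits to the fixed limits $r^{2}$ and $R_{\max}^{2}$ and leaves $\lambda\pi\frac{R_{S}}{R_{E}}\int_{r^{2}}^{R_{\max}^{2}}\bigl[1-(1+B\,w^{-\alpha/2})^{-\kappa}\bigr]\,dw$. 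Splitting off the constant part integrates trivially to $R_{\max}^{2}-r^{2}$, so everything hinges on evaluating $\int_{r^{2}}^{R_{\max}^{2}}(1+B\,w^{-\alpha/2})^{-\kappa}\,dw$ in closed form.

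This last integral is the crux of the argument and the step I expect to be the main obstacle. The plan is to use the Gauss hypergeometric antiderivative
\[
\int_{0}^{x}\left(1+B\,w^{-\alpha/2}\right)^{-\kappa}dw = x\,{}_{2}F_{1}\!\left(\kappa,-\tfrac{2}{\alpha};1-\tfrac{2}{\alpha};-B\,x^{-\alpha/2}\right),
\]
which I would verify by expanding ${}_{2}F_{1}$ as a power series, using the Pochhammer identity $\frac{(-2/\alpha)_{n}}{(1-2/\alpha)_{n}}=\frac{-2/\alpha}{\,n-2/\alpha\,}$, differentiating the product $x\,{}_{2}F_{1}(\cdot)$ term by term, and recognizing that the factor $(1-\alpha n/2)=-\tfrac{\alpha}{2}(n-2/\alpha)$ collapses the coefficients so that the series reassembles the binomial expansion of $(1+B\,x^{-\alpha/2})^{-\kappa}$. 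Identifying $\eta(s,\rho;\alpha,\kappa,G_{\text{sl}},G_{\text{ml}})={}_{2}F_{1}\!\left(\kappa,-\tfrac{2}{\alpha};1-\tfrac{2}{\alpha};-B\,\rho^{-\alpha}\right)$ and evaluating at $x=R_{\max}^{2}$ and $x=r^{2}$ (so that $x^{-\alpha/2}=\rho^{-\alpha}$) then gives $\int_{r^{2}}^{R_{\max}^{2}}(1+B\,w^{-\alpha/2})^{-\kappa}\,dw=R_{\max}^{2}\,\eta(s,R_{\max})-r^{2}\,\eta(s,r)$. Substituting back produces the exponent $R_{\max}^{2}[1-\eta(s,R_{\max})]-r^{2}[1-\eta(s,r)]$, and collecting the surviving factor $(1+\frac{G_{\text{sl}}}{G_{\text{ml}}}\frac{s}{r^{\alpha}\beta})^{-\kappa}$ from Lemma 1 together with the noise term $e^{-s\bar{\sigma}^{2}}$ and the evaluation $s=\beta l^{\alpha}Q_{i}$ reproduces (\ref{Formula_CP_MSP_kappa_1_1}) exactly, completing the proof.
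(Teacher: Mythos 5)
Your proposal is correct and follows essentially the same route as the paper's Appendix D: collapse the sum to the $k=0$ term at $\kappa=1$, factor out $e^{-s\bar{\sigma}^{2}}$, and evaluate the exponent integral of Lemma~1 in closed form via the Gauss hypergeometric function $\eta$ (your ${}_{2}F_{1}(\kappa,-\tfrac{2}{\alpha};1-\tfrac{2}{\alpha};\cdot)$ agrees with the paper's $\eta$ by symmetry of the first two parameters, since $\tfrac{\alpha-2}{\alpha}=1-\tfrac{2}{\alpha}$). The only difference is that you explicitly verify the hypergeometric antiderivative by series expansion, a step the paper simply asserts when stating $F(u;s)$.
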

\begin{proof}
    Please see Appendix D.
\end{proof}

\begin{corollary}
When $\kappa = 1$, the coverage probability of the typical UT$_i$ based on ISINR is approximated by (\ref{Formula_CP_ISINR_kappa_1}) at the top of the next page.
\begin{figure*}[!ht]
\setlength{\abovecaptionskip}{-1.5cm}
\setlength{\belowcaptionskip}{-0.5cm}
\normalsize
\vspace{-2mm}
\begin{equation}
    \mathbb{P}_{\text{I}} \left( {\Lambda_{i};\kappa=1} \right)
    \approx 1 - \sum\limits_{k={N_{\textcolor{black}{\text{U}}}}+1-i}^{N_{\textcolor{black}{\text{U}}}}{\left( \begin{matrix}
   N_{\textcolor{black}{\text{U}}}  \\
   k  \\
\end{matrix} \right){{\left[ {{F}_{Z}}\left( {Q}_{i};\kappa=1 \right) \right]}^{k}}{{\left[ 1-{{F}_{Z}}\left( {Q}_{i};\kappa=1 \right) \right]}^{N_{\textcolor{black}{\text{U}}}-k}}},
\label{Formula_CP_ISINR_kappa_1}
\end{equation}
where 
\begin{equation}
    {{F}_{Z}}\left( x;\kappa=1 \right)
    = 1 - \int_{{{L}_{\min }}}^{{{L}_{\max }}}{\int_{{{R}_{\min }}}^{{{R}_{\max }}}{{{\mathcal{L}}_{I_{i}^{\text{inter}}+{\bar{\sigma}^{2}}}}\left( s \right)\left| _{s=\beta x {{l}^{\alpha }}} \right. {{f}_{R}}\left( r \right)dr} {{f}_{{{L}_{i}}}}\left( l \right)dl}.
\end{equation}
\hrulefill
\end{figure*}
\end{corollary}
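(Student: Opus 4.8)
The plan is to obtain the statement as the $\kappa=1$ specialization of Theorem~2, since the outer order-statistics structure of (\ref{Formula_CP_ISINR}) is inherited verbatim and only the unordered CDF $F_Z$ needs simplifying. First I would recall that in (\ref{Formula_F_Z}) the entire $\kappa$-dependence of $F_Z$ sits inside the finite sum $\sum_{k=0}^{\kappa-1}$, which is precisely the series representation of the Gamma (Erlang) survival function of the desired-signal power, $\mathbb{P}(|h|^2>t)=\sum_{k=0}^{\kappa-1}\frac{(\beta t)^k}{k!}e^{-\beta t}$; averaging each term over the interference is what produces the $k$-th Laplace-transform derivative. Setting $\kappa=1$ collapses this sum to its single $k=0$ term.

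For that term one has $\frac{(\beta x l^\alpha)^0}{0!}=1$, $(-1)^0=1$, and the zeroth derivative $\frac{d^0}{ds^0}\mathcal{L}_{I^{\text{inter}}+\bar\sigma^2}(s)\big|_{s=\beta x l^\alpha}$ is simply $\mathcal{L}_{I^{\text{inter}}+\bar\sigma^2}(\beta x l^\alpha)$, with no derivative correction. Substituting this into (\ref{Formula_F_Z}) reproduces exactly the claimed $F_Z(x;\kappa=1)$, and inserting this $F_Z$ evaluated at $Q_i$ back into the binomial sum of (\ref{Formula_CP_ISINR}) yields (\ref{Formula_CP_ISINR_kappa_1}).

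As an independent check, I would re-derive $F_Z(x;\kappa=1)$ from first principles. At $\kappa=1$ the Nakagami-$m$ model reduces to Rayleigh fading, so the Gamma approximation of the fading power becomes exact and $|h|^2\sim\mathrm{Exp}(\beta)$. Then $\mathbb{P}(Z>x)=\mathbb{P}\bigl(|h|^2>x l^\alpha (I^{\text{inter}}+\bar\sigma^2)\bigr)$; conditioning on the link distance and on the interference and using the exponential tail gives $e^{-\beta x l^\alpha(I^{\text{inter}}+\bar\sigma^2)}$, while averaging over $I^{\text{inter}}$ converts this directly into $\mathcal{L}_{I^{\text{inter}}+\bar\sigma^2}(s)$ at $s=\beta x l^\alpha$. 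A final average over $l$ and $r$ against the densities $f_L$ and $f_R$ reproduces the stated double integral, confirming the collapse of the sum.

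The computation is essentially mechanical, so there is no serious obstacle. The one point to watch is that, unlike Corollary~1 in (\ref{Formula_CP_MSP_kappa_1_1}) where the explicit form from Lemma~1 (\ref{Formula_LT_I_inter}) was inserted and further reduced through the auxiliary function $\eta$, here the Laplace transform is deliberately kept in its compact abstract form $\mathcal{L}_{I^{\text{inter}}+\bar\sigma^2}(\cdot)$; I would therefore resist over-simplifying and merely verify that the surviving $k=0$ term carries no derivative. I would also state explicitly that, since the Gamma approximation is exact at $\kappa=1$, the remaining ``$\approx$'' in (\ref{Formula_CP_ISINR_kappa_1}) stems solely from the point-$O$ distance approximation inherited from Lemma~1.
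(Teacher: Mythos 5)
Your proposal is correct and follows essentially the same route as the paper's own proof: set $\kappa=1$ in the unordered CDF $F_Z$ so that the sum collapses to the single $k=0$ term (the zeroth Laplace-transform derivative evaluated at $s=\beta x l^{\alpha}$), then substitute $F_Z(Q_i;\kappa=1)$ into the order-statistics binomial expression from Theorem~2. Your additional first-principles check via the exponential tail and the remark on where the remaining approximation comes from are sound but not needed beyond what the paper does.
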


\begin{proof}
Taking $\kappa = 1$, the CDF of the unordered ISINR $Z$ in (\ref{Formula_F_Z}) is written as
\begin{equation}
\begin{aligned}
    &{{F}_{Z}}\left( x; \kappa=1 \right) \\
    &= 1-\int_{{{L}_{\min }}}^{{{L}_{\max }}}{\int_{{{R}_{\min }}}^{{{R}_{\max }}}{{{\mathcal{L}}_{{{I}^{\text{inter}}}+{\bar{\sigma}^{2}}}}\left( s \right)\left| _{s=\beta x{{l}^{\alpha }}} \right. {{f}_{R}}\left( r \right)dr} {{f}_{L}}\left( l \right)dl},
    \label{Formula_F_Z_kappa_1_X}
\end{aligned}
\end{equation}
where ${{\mathcal{L}}_{I_{i}^{\text{inter}}+{\bar{\sigma}^{2}}} \left( s \right)}$ and ${{\mathcal{L}}_{I_{i}^{\text{inter}}}} \left( s \right)$ are shown in (\ref{Formula_LT_I_inter_sigma}) and (\ref{Formula_LT_I_inter_closed}), respectively.
Inserting (\ref{Formula_F_Z_kappa_1_X}) into (\ref{Formula_CP_ISINR_kappa_1}) with $x=Q_i$, the expression for coverage probability based on ISINR and the first-order channel is obtained, which completes this proof.
\end{proof}

\subsection{Coverage on Second-Order Channel}
\begin{corollary}
When $\kappa = 2$, the coverage probability of the typical UT$_i$ based on MSP is approximated by (\ref{Formula_CP_MSP_kappa_2_1}) at the top of the next page.
\begin{figure*}[!ht]
\setlength{\abovecaptionskip}{-1.5cm}
\setlength{\belowcaptionskip}{-0.5cm}
\normalsize
\begin{equation}
\begin{aligned}
& \mathbb{P}_{\text{M}} \left( {{\Lambda}_{i};\kappa=2} \right) \approx \int_{{L}_{\min }}^{{L}_{\max }} \int_{R_{\min}}^{R_{\max}}  \\
&{\left\{ \begin{aligned}
  & \exp \left( -\lambda_{\textcolor{black}{\text{S}}} \pi \frac{R_{\textcolor{black}{\text{S}}}}{R_{\textcolor{black}{\text{E}}}}\cdot \left\{ \begin{aligned}
  & {{R}_{\max }}^{2}\left[ 1-\eta \left( s,{{R}_{\max }};\alpha ,\kappa ,{{G}_{\text{sl}}},{{G}_{\text{ml}}} \right) \right] \\ 
 & -{{r}^{2}}\left[ 1-\eta \left( s,r;\alpha ,\kappa ,{{G}_{\text{sl}}},{{G}_{\text{ml}}} \right) \right] \\ 
\end{aligned} \right\}-s{\bar{\sigma}^{2}} \right) \cdot \left[ \left( 1+s{\bar{\sigma}^{2}} \right) {{\left( 1+\frac{{{G}_{\text{sl}}}}{{{G}_{\text{ml}}}}\frac{s}{{{r}^{\alpha }}\beta } \right)}^{-\kappa }}  \right. \\
 & \left. - \left\{ \begin{aligned}
  & \lambda_{\textcolor{black}{\text{S}}} \pi \frac{R_{\textcolor{black}{\text{S}}}}{R_{\textcolor{black}{\text{E}}}}\frac{2}{\alpha }\cdot \left\{ \begin{aligned}
  & {{R}_{\max }}^{2}\left[ \eta \left( s,{{R}_{\max }};\alpha ,\kappa ,{{G}_{\text{sl}}},{{G}_{\text{ml}}} \right)-\frac{1}{{{\left( 1+\left( \frac{{{G}_{\text{sl}}}}{{{G}_{\text{ml}}}}\frac{s}{\kappa } \right){{R}_{\max }}^{-\alpha } \right)}^{\kappa }}} \right] \\ 
 & -{{r}^{2}}\left[ \eta \left( s,r;\alpha ,\kappa ,{{G}_{\text{sl}}},{{G}_{\text{ml}}} \right)-\frac{1}{{{\left( 1+\left( \frac{{{G}_{\text{sl}}}}{{{G}_{\text{ml}}}}\frac{s}{\kappa } \right){{r}^{-\alpha }} \right)}^{\kappa }}} \right] \\ 
\end{aligned} \right\} \\ 
 & \cdot {{\left( 1+\frac{{{G}_{\text{sl}}}}{{{G}_{\text{ml}}}}\frac{s}{{{r}^{\alpha }}\beta } \right)}^{-\kappa }}-\frac{{{G}_{\text{sl}}}}{{{G}_{\text{ml}}}}\frac{s\cdot \kappa }{{{r}^{\alpha }}\beta }{{\left( 1+\frac{{{G}_{\text{sl}}}}{{{G}_{\text{ml}}}}\frac{s}{{{r}^{\alpha }}\beta } \right)}^{-\kappa -1}} \\ 
\end{aligned} \right\} \right] \\ 
\end{aligned} \right\} } \\ 
& \left| _{s=\beta {{l}^{\alpha }}{{Q}_{i}}} \right. \cdot {{f}_{R}}\left( r \right)dr {{f}_{{{L}_{i}}}}\left( l \right)dl.
\end{aligned}
\label{Formula_CP_MSP_kappa_2_1}
\end{equation}
\hrulefill
\end{figure*}
\end{corollary}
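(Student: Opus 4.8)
The plan is to specialize the general MSP expression (\ref{Formula_CP_MSP}) from Theorem~1 to the case $\kappa=2$ and then evaluate the single derivative of the interference Laplace transform that survives. When $\kappa=2$ the inner sum over $k$ truncates to the two terms $k=0,1$, so the integrand collapses to $\mathcal{L}_{I_i^{\text{inter}}+\bar{\sigma}^{2}}(s) - \beta l^{\alpha}Q_i\,\frac{d}{ds}\mathcal{L}_{I_i^{\text{inter}}+\bar{\sigma}^{2}}(s)$ evaluated at $s=\beta l^{\alpha}Q_i$. Because the evaluation point obeys $\beta l^{\alpha}Q_i=s$, this is exactly $\mathcal{L}_{I_i^{\text{inter}}+\bar{\sigma}^{2}}(s)-s\,\mathcal{L}'_{I_i^{\text{inter}}+\bar{\sigma}^{2}}(s)$; the entire corollary therefore reduces to differentiating the Lemma~1 Laplace transform once and substituting. (This single-derivative structure is precisely what makes $\kappa=2$ tractable: general $\kappa$ would require up to the $(\kappa-1)$-th derivative.)

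First I would write $\mathcal{L}_{I_i^{\text{inter}}+\bar{\sigma}^{2}}(s)=e^{-s\bar{\sigma}^{2}}\mathcal{L}_{I_i^{\text{inter}}}(s)$ (established in (\ref{Formula_LT_I_inter_sigma})) and factor the Lemma~1 expression (\ref{Formula_LT_I_inter}) as $\mathcal{L}_{I_i^{\text{inter}}+\bar{\sigma}^{2}}(s)=\exp(-s\bar{\sigma}^{2}-g(s))\,(1+h(s))^{-\kappa}$, where $g(s)=\lambda\pi\frac{R_S}{R_E}\{R_{\max}^{2}[1-\eta(s,R_{\max})]-r^{2}[1-\eta(s,r)]\}$ collects the far-interferer ($t\ge2$) contribution and $h(s)=\frac{G_{\text{sl}}}{G_{\text{ml}}}\frac{s}{r^{\alpha}\beta}$ is the nearest-interferer ($t=1$) factor. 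Logarithmic differentiation then gives $\mathcal{L}'=\mathcal{L}\,[-\bar{\sigma}^{2}-g'(s)-\kappa\frac{h'(s)}{1+h(s)}]$, so that $\mathcal{L}-s\mathcal{L}'=\mathcal{L}\,[1+s\bar{\sigma}^{2}+s\,g'(s)+s\kappa\frac{h'(s)}{1+h(s)}]$. The nearest-interferer piece is then immediate, since $s\kappa h'(s)=\frac{G_{\text{sl}}}{G_{\text{ml}}}\frac{s\kappa}{r^{\alpha}\beta}$ and multiplying by $\frac{1}{1+h}$ against the $(1+h)^{-\kappa}$ factor inside $\mathcal{L}$ produces exactly the trailing $\frac{G_{\text{sl}}}{G_{\text{ml}}}\frac{s\kappa}{r^{\alpha}\beta}(1+\frac{G_{\text{sl}}}{G_{\text{ml}}}\frac{s}{r^{\alpha}\beta})^{-\kappa-1}$ term of (\ref{Formula_CP_MSP_kappa_2_1}).

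The main obstacle is evaluating $s\,g'(s)$, i.e. the $s$-derivative of the $\eta$-terms, which is where the factor $\frac{2}{\alpha}$ and the extra $(1+(\frac{G_{\text{sl}}}{G_{\text{ml}}}\frac{s}{\kappa})\rho^{-\alpha})^{-\kappa}$ corrections in (\ref{Formula_CP_MSP_kappa_2_1}) originate. Rather than differentiate the hypergeometric-type $\eta$ term by term, my plan is to exploit that, after the change of variables used to obtain Lemma~1, $\rho^{2}\eta(s,\rho)=\int_{0}^{\rho}2t\,(1+\frac{G_{\text{sl}}}{G_{\text{ml}}}\frac{s}{\kappa}t^{-\alpha})^{-\kappa}\,dt$, so $\eta$ depends on $(s,\rho)$ only through $x=\frac{G_{\text{sl}}}{G_{\text{ml}}}\frac{s}{\kappa}\rho^{-\alpha}$. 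Writing $\eta=\psi(x)$ with $\psi(x)=\int_{0}^{1}2\tau(1+x\tau^{-\alpha})^{-\kappa}\,d\tau$ and integrating by parts with $dv=2\tau\,d\tau$ yields the first-order relation $x\psi'(x)=\frac{2}{\alpha}[\psi(x)-(1+x)^{-\kappa}]$, i.e. $s\,\partial_s\eta(s,\rho)=\frac{2}{\alpha}[\eta(s,\rho)-(1+\frac{G_{\text{sl}}}{G_{\text{ml}}}\frac{s}{\kappa}\rho^{-\alpha})^{-\kappa}]$. The $\tau=1$ boundary term is what generates the $(1+x)^{-\kappa}$ correction, and the vanishing of the $\tau\to0$ boundary term (which I would verify using $\kappa\ge1$ via $\tau^{2}(1+x\tau^{-\alpha})^{-\kappa}\sim x^{-\kappa}\tau^{2+\alpha\kappa}\to0$) is the only convergence point needing care.

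Finally I would apply this relation at $\rho=R_{\max}$ and $\rho=r$ to get $s\,g'(s)=-\lambda\pi\frac{R_S}{R_E}\frac{2}{\alpha}\{R_{\max}^{2}[\eta(s,R_{\max})-(1+\frac{G_{\text{sl}}}{G_{\text{ml}}}\frac{s}{\kappa}R_{\max}^{-\alpha})^{-\kappa}]-r^{2}[\eta(s,r)-(1+\frac{G_{\text{sl}}}{G_{\text{ml}}}\frac{s}{\kappa}r^{-\alpha})^{-\kappa}]\}$, collect the three contributions $1+s\bar{\sigma}^{2}$, $s\,g'(s)$, and the nearest-interferer term, multiply through by the factored $\mathcal{L}=\exp(-s\bar{\sigma}^{2}-g(s))(1+h)^{-\kappa}$, and then set $s=\beta l^{\alpha}Q_i$ and integrate against $f_R(r)$ and $f_{L_i}(l)$ to recover (\ref{Formula_CP_MSP_kappa_2_1}), specializing $\kappa=2$ where the statement requires. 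The remaining work is pure bookkeeping of signs and of the $(1+h)^{-\kappa}$ versus $(1+h)^{-\kappa-1}$ powers.
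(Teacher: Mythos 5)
Your proposal is correct and follows essentially the same route as the paper's Appendix E: specialize Theorem 1 to $\kappa=2$ so the integrand collapses to $\left[(1+s\bar{\sigma}^{2})\mathcal{L}_{I_i^{\text{inter}}}(s)-s\,\tfrac{d}{ds}\mathcal{L}_{I_i^{\text{inter}}}(s)\right]e^{-s\bar{\sigma}^{2}}$, differentiate the Lemma~1 Laplace transform once, and substitute $s=\beta l^{\alpha}Q_i$ before integrating against $f_R$ and $f_{L_i}$. The only divergence is in one sub-step: where the paper applies the Leibniz integral rule to $F(u;s)$ (whose limits, not integrand, depend on $s$) and assembles $t_1'F+t_1F'$, you instead derive the equivalent scaling identity $s\,\partial_s\eta(s,\rho)=\tfrac{2}{\alpha}\left[\eta(s,\rho)-\left(1+\tfrac{G_{\text{sl}}}{G_{\text{ml}}}\tfrac{s}{\kappa}\rho^{-\alpha}\right)^{-\kappa}\right]$ by integration by parts on the rescaled representation of $\eta$; this identity is valid, yields the same $sg'(s)$ as the paper's (\ref{Formula_dLI_ds_2}), and hence reproduces (\ref{Formula_CP_MSP_kappa_2_1}) exactly.
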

\begin{proof}
    Please see Appendix E.
\end{proof}

\begin{corollary}
When $\kappa = 2$, the coverage probability of typical UT$_i$ based on ISINR is approximated as (\ref{Formula_CP_ISINR_kappa_2_1}) at the top of the next page, 
\begin{figure*}[!ht]
\setlength{\abovecaptionskip}{-1.5cm}
\setlength{\belowcaptionskip}{-0.5cm}
\normalsize
\vspace{-2mm}
\begin{equation}
    \mathbb{P}_{\text{I}} \left( {\Lambda_{i};\kappa=2} \right)
    \approx 1 - \sum\limits_{k=N_{\textcolor{black}{\text{U}}} + 1 - i}^{N_{\textcolor{black}{\text{U}}}}{\left( \begin{matrix}
   N_{\textcolor{black}{\text{U}}}  \\
   k  \\
\end{matrix} \right){{\left[ {{F}_{Z}}\left( Q_i;\kappa=2 \right) \right]}^{k}}{{\left[ 1-{{F}_{Z}}\left( Q_i;\kappa=2 \right) \right]}^{N_{\textcolor{black}{\text{U}}}-k}}},
\label{Formula_CP_ISINR_kappa_2_1}
\end{equation}
where
\begin{equation}
    {{F}_{Z}}\left( x; \kappa=2 \right)
    = 1-\int_{{{L}_{\min }}}^{{{L}_{\max }}}{\int_{{{R}_{\min }}}^{{{R}_{\max }}}{ \left\{ \left[ \left( 1+s{\bar{\sigma}^{2}} \right){{\mathcal{L}}_{I_{i}^{\text{inter}}}}\left( s \right)-s\frac{d{{\mathcal{L}}_{I_{i}^{\text{inter}}}}\left( s \right)}{ds} \right]{{e}^{-s{\bar{\sigma}^{2}}}} \right\} \left| _{s=\beta x{{l}^{\alpha }}} \right. {{f}_{R}}\left( r \right)dr} {{f}_{L}}\left( l \right)dl},
\label{Formula_CP_ISINR_kappa_2_1F}
\end{equation}
\hrulefill
\end{figure*}
and the expressions of ${{\mathcal{L}}_{I_{i}^{\text{inter}}}}\left( s \right)$ and $\frac{d{{\mathcal{L}}_{I_{i}^{\text{inter}}}}\left( s \right)}{ds}$ are given in (\ref{Formula_LT_I_inter_closed}) and (\ref{Formula_dLI_ds_2}), respectively.
\end{corollary}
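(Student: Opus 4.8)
The plan is to specialize Theorem~2 to the case $\kappa = 2$, so that essentially all of the work reduces to collapsing the finite sum $\sum_{k=0}^{\kappa-1}$ that appears in the unordered CDF expression (\ref{Formula_F_Z}). Because ISINR ordering channels the entire order-statistics dependence through the outer binomial sum of (\ref{Formula_CP_ISINR}), the task is purely to evaluate $F_Z(x)$ at $x = Q_i$; the outer sum then carries over verbatim. This is the same route already used for the $\kappa=1$ ISINR corollary, but now with two terms in the inner sum rather than one.

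First I would set $\kappa = 2$ in (\ref{Formula_F_Z}), so the inner sum contributes only $k=0$ and $k=1$. The $k=0$ term is $\mathcal{L}_{I^{\text{inter}}+\bar{\sigma}^2}(s)$, and the $k=1$ term is $-\beta x l^{\alpha}\,\frac{d}{ds}\mathcal{L}_{I^{\text{inter}}+\bar{\sigma}^2}(s)$, both evaluated at $s=\beta x l^{\alpha}$. Since the substitution point satisfies $\beta x l^{\alpha}=s$, the integrand collapses to $\mathcal{L}_{I^{\text{inter}}+\bar{\sigma}^2}(s) - s\,\frac{d}{ds}\mathcal{L}_{I^{\text{inter}}+\bar{\sigma}^2}(s)$ evaluated at $s=\beta x l^{\alpha}$.

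Next I would peel off the deterministic noise. Because $\bar{\sigma}^2$ is constant, $\mathcal{L}_{I^{\text{inter}}+\bar{\sigma}^2}(s)=e^{-s\bar{\sigma}^2}\mathcal{L}_{I^{\text{inter}}}(s)$, and the product rule gives $\frac{d}{ds}\mathcal{L}_{I^{\text{inter}}+\bar{\sigma}^2}(s)=e^{-s\bar{\sigma}^2}\bigl[\frac{d}{ds}\mathcal{L}_{I^{\text{inter}}}(s)-\bar{\sigma}^2\mathcal{L}_{I^{\text{inter}}}(s)\bigr]$. Substituting these into the two-term expression and factoring out the common $e^{-s\bar{\sigma}^2}$ merges them into $e^{-s\bar{\sigma}^2}\bigl[(1+s\bar{\sigma}^2)\mathcal{L}_{I^{\text{inter}}}(s)-s\,\frac{d}{ds}\mathcal{L}_{I^{\text{inter}}}(s)\bigr]$, which is exactly the bracketed integrand of (\ref{Formula_CP_ISINR_kappa_2_1F}). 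Inserting this $F_Z(\cdot\,;\kappa=2)$ into the binomial order-statistics sum with $x=Q_i$ then produces (\ref{Formula_CP_ISINR_kappa_2_1}) and closes the argument.

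The main obstacle is not this assembly but supplying the closed form of $\frac{d}{ds}\mathcal{L}_{I^{\text{inter}}}(s)$ that the statement cites as (\ref{Formula_dLI_ds_2}). Differentiating the Lemma~1 expression (\ref{Formula_LT_I_inter}) is delicate because $s$ enters in three coupled places: the prefactor $(s/\kappa)^{2/\alpha}$, both variable limits of the definite integral, and the trailing nearest-interferer factor $(1+\frac{G_{\text{sl}}}{G_{\text{ml}}}\frac{s}{r^{\alpha}\beta})^{-\kappa}$. I would attack this with the Leibniz rule for the $s$-dependent limits together with the product and chain rules, and I expect the boundary contributions from the two limits to combine with the prefactor derivative into the $\eta$-type expressions already appearing in the MSP corollaries, leaving a tractable closed form once it is evaluated at $s=\beta x l^{\alpha}$.
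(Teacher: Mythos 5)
Your proposal is correct and follows essentially the same route as the paper: specialize the unordered CDF $F_Z$ in (\ref{Formula_F_Z}) to $\kappa=2$, collapse the two-term sum into $\mathcal{L}_{I^{\text{inter}}+\bar{\sigma}^{2}}(s)-s\,\tfrac{d}{ds}\mathcal{L}_{I^{\text{inter}}+\bar{\sigma}^{2}}(s)$, factor out $e^{-s\bar{\sigma}^{2}}$ to reach the $(1+s\bar{\sigma}^{2})\mathcal{L}_{I^{\text{inter}}}(s)-s\mathcal{L}_{I^{\text{inter}}}'(s)$ form, and feed the result into the binomial order-statistics sum at $x=Q_i$ (the paper relegates the Leibniz-rule computation of $\mathcal{L}_{I^{\text{inter}}}'(s)$ to the proof of Corollary~3, exactly as you anticipate). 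Your write-up is in fact more explicit than the paper's own two-line proof, and the algebra checks out.
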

\begin{proof}
    This proof is similar to that of Corollary 2, where we first take $\kappa=2$ for (\ref{Formula_F_Z}), and insert (\ref{Formula_LT_I_inter_closed}) and (\ref{Formula_dLI_ds_2}) with $x=Q_i$. Then, the expression for coverage probability of the typical UT$_i$ based on ISINR and second-order channel can be obtained, which completes this proof.
\end{proof}

\section{Simulations and Numerical Results}

\subsection{Simulation Setup}
This section provides a quantitative analysis of the downlink performance of LEO multi-satellite NOMA networks. 
We follow the parameter configuration in~\cite{shang2024multi}, with the radius of Earth $R_{\textcolor{black}{\text{E}}} = 6,371.393$ km, satellite altitude $H_{\textcolor{black}{\text{S}}} = 500$ km, transmit power $P=50$ dBm, main-lobe gain $G_\text{ml}=30$ dBi, side-lobe gain $G_\text{ml}=10$ dBi, noise power ${\sigma}^2=-110$ dBm.
The number of LEO satellites is set as $N_S = 600$, the radius of the serving area as $R_{\textcolor{black}{\text{T}}} = 200$ km, and the path-loss coefficient as $\alpha \approx 2.0$. The frequency band with carrier frequency $f_c = 2$ GHz and bandwidth $B=100$ MHz is considered. 
In Figs. \ref{fig4} and \ref{fig5}, PA coefficients $\left[\xi_1,\xi_2,\xi_3\right]$ for UT$_1$, UT$_2$, UT$_3$ are set $\left[0.15, 0.3, 0.55\right]$.
In the following, the pairing of an ordering scheme, e.g., MSP or ISINR, and a PA scheme, e.g., ETPA, ERPA, or FPA, is represented using a hyphen (-), such as MSP-ETPA, ISINR-ERPA.

\captionsetup{font={scriptsize}}
\begin{figure*}
\begin{center}
\centering
\setlength{\abovecaptionskip}{+0.1cm}
\setlength{\belowcaptionskip}{-0.0cm}
\centering
\includegraphics[width=7.0in, height=2.4in]{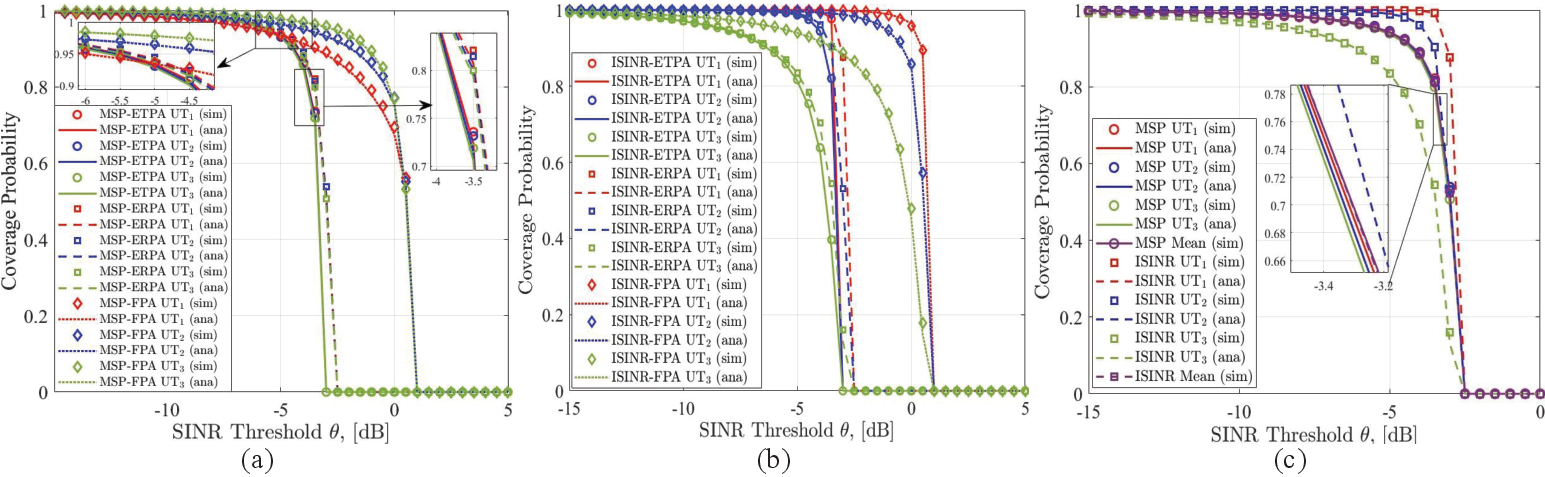}
\renewcommand\figurename{Fig.}
\caption{\scriptsize Coverage probability versus SINR threshold for $\kappa=1$. (a) MSP ordering. (b) ISINR ordering. (c) Comparison of MSP \& ISINR ordering for ERPA scheme.}
\label{fig4}
\end{center}
\vspace{-2mm}
\end{figure*}

\captionsetup{font={scriptsize}}
\begin{figure*}
\begin{center}
\centering
\setlength{\abovecaptionskip}{+0.1cm}
\setlength{\belowcaptionskip}{-0.0cm}
\centering
\includegraphics[width=7.0in, height=2.4in]{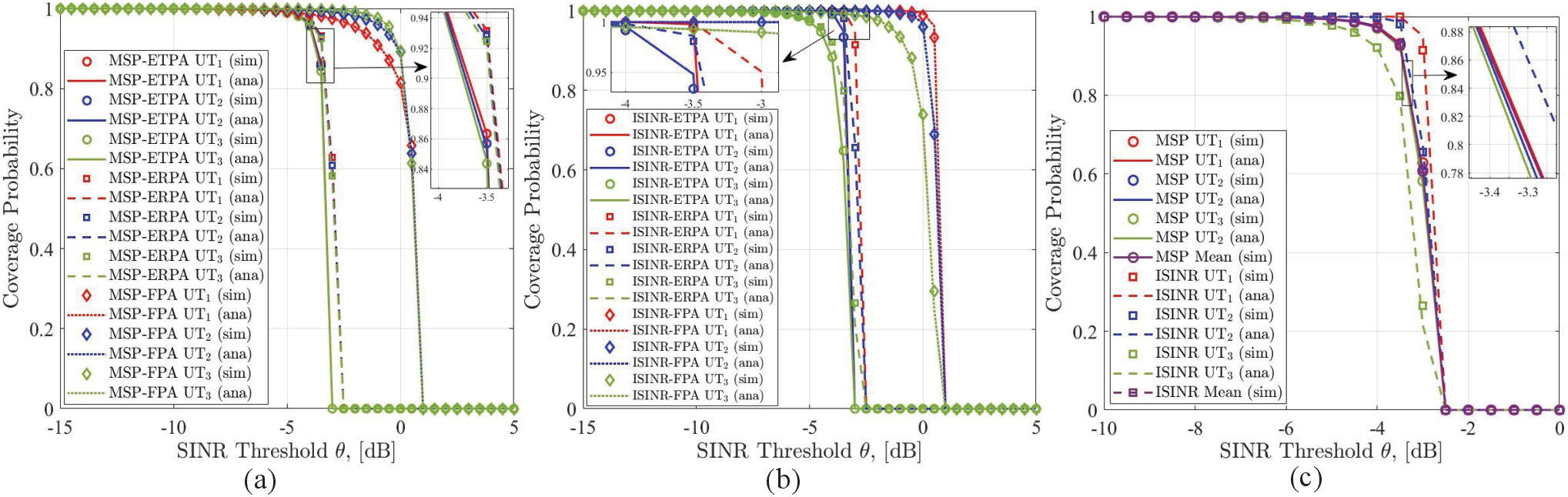}
\renewcommand\figurename{Fig.}
\caption{\scriptsize Coverage probability versus SINR threshold for $\kappa=2$. (a) MSP ordering. (b) ISINR ordering. (c) Comparison of MSP \& ISINR ordering under ERPA scheme.}
\label{fig5}
\end{center}
\vspace{-6mm}
\end{figure*}

\subsection{Coverage Probability}
In Fig. \ref{fig4}, we compare UT$_1$, UT$_2$, UT$_3$ in terms of coverage probability under different combinations of ordering schemes and PA schemes with $\kappa=1$. The precision of the expressions derived for the coverage probability when $\kappa=1$ is verified since the simulations and the analytical results match closely. Fig. \ref{fig4}(a) shows that MSP with ERPA (MSP-ERPA) scheme has a slight advantage over MSP with ETPA (MSP-ETPA) scheme. In each scheme, the performance of three UTs is almost the same, although that of UT$_1$ is a bit higher due to its better channel condition.
In Fig. \ref{fig4}(b), UTs under the ISINR-ERPA scheme have a bit higher coverage than their counterparts under the ISINR-ETPA scheme. However, the performance gap between every two UTs is enlarged - UT$_1$ and UT$_2$ have higher coverage than UT$_3$.
Fig. \ref{fig4}(c) compares the coverage of UTS under the MSP-ERPA and the ISINR-ERPA scheme. The ISINR ordering is observed to be superior for UT$_1$ and UT$_2$ while inferior for UT$_3$. This is because UT$_3$ in ISINR ordering is weaker than in MSP ordering. The mean coverage of three UTs for both orderings is almost the same.

Next, we set $\kappa=2$ to examine the effects of LoS on coverage in Fig. \ref{fig5}, where the precision of the derived expressions is also verified. In general, similar trends as in Fig. \ref{fig4} can be witnessed for all three figures. The convergence points on the x-axis are the same; however, thanks to LoS components, the coverage of $\kappa=2$ corresponding to thresholds before convergence points is higher than that of $\kappa=1$. Moreover, in both Fig. \ref{fig4} and Fig. \ref{fig5}, FPA schemes can achieve higher coverage than their ETPA and ERPA counterparts because more power allocated to weak UTs helps maintain a certain coverage performance even in poor channel conditions, although fairness is sacrificed. This also indicates that the convergence is related to the PA coefficients of different UTs and the NOMA necessary condition.

To better illustrate the effect of $\kappa$, simulations are conducted in Fig. \ref{fig6} to show the average coverage probability of the UTs under the MSP ordering and ETPA scheme. 
As mentioned earlier, the Nakagami-$m$ distribution is adopted for small-scale fading, where the factor $\kappa$ not only serves as the shape parameter of the Gamma distribution, but also corresponds to the fading parameter $m$ that characterizes the channel conditions, i.e., $\kappa = m$.
The results reveal that an increase in $\kappa$ leads to a higher average coverage probability, since a larger $\kappa$ indicates stronger LoS components in the propagation environment. This observation verifies that improving LoS conditions can effectively improve the performance of NOMA coverage in LEO satellite networks.

\captionsetup{font={scriptsize}}
\begin{figure}[tp]
\begin{center}
\setlength{\abovecaptionskip}{+0.2cm}
\setlength{\belowcaptionskip}{-0.0cm}
\centering
  \includegraphics[width=3.0in, height=2.4in]{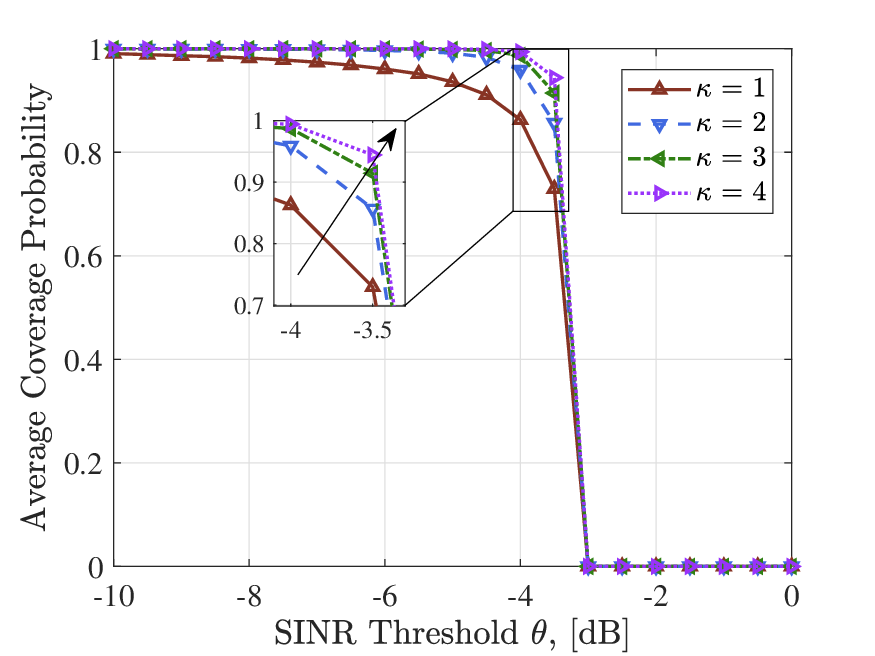}
\renewcommand\figurename{Fig.}
\caption{\textcolor{black}{Average coverage probability versus SINR threshold for different $\kappa$ under MSP ordering and ETPA scheme.}}
\label{fig6}
\end{center}
\vspace{-6mm}
\end{figure}

To enhance comparisons herein, in Fig. \ref{fig7}, we compare UTs' coverage probability under ISINR-ERPA for different constellation models, i.e. the PPP constellation model and the actual Walker-Delta, e.g. Starlink, constellation model, at an orbital altitude $H_{\textcolor{black}{\text{S}}} = 500$ km. The accuracy of the PPP model in representing real constellation setups is validated through simulations; therefore, the satellite distribution is assumed to be independent of the PPP to enable tractable analysis.
\captionsetup{font={scriptsize}}
\begin{figure}[tp]
\begin{center}
\setlength{\abovecaptionskip}{+0.2cm}
\setlength{\belowcaptionskip}{-0.0cm}
\centering
  \includegraphics[width=3.0in, height=2.4in]{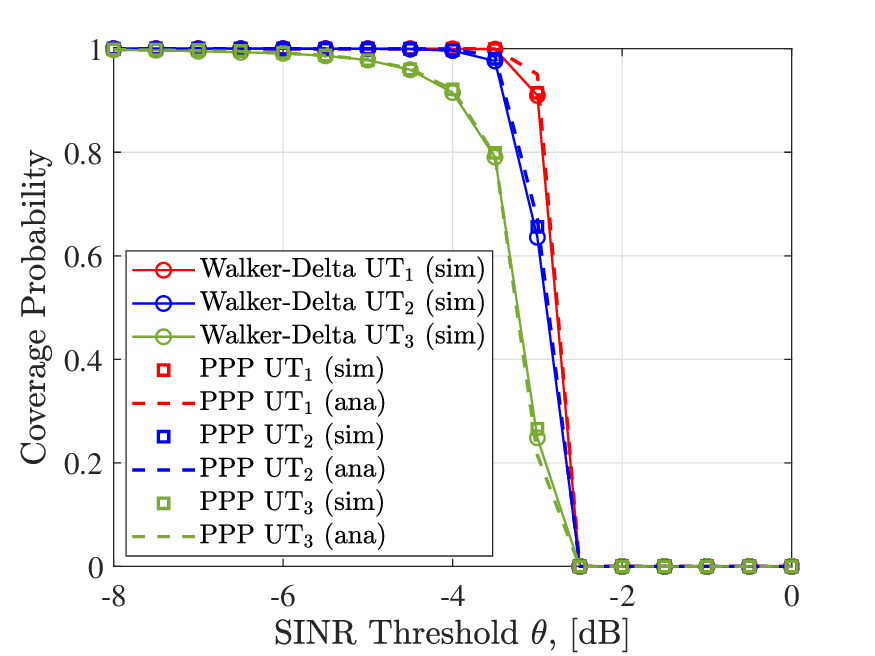}
\renewcommand\figurename{Fig.}
\caption{Coverage probability versus SINR threshold for different constellations.}
\label{fig7}
\end{center}
\vspace{-6mm}
\end{figure}

For fairness on the receiver sides and to clearly differentiate the performance of three UTs, ISINR ordering with the ERPA scheme is selected in Figs. \ref{fig8}, \ref{fig9}, and \ref{fig10}.
First, Fig. \ref{fig8} shows the influence of different SIC conditions on coverage. The performance is impaired without SIC, i.e., $\varpi=1$. However, there is virtually no difference when $\varpi$ changes from $0.01$ to $0.1$, which shows no extra benefits in appointing a perfect SIC when the SIC is relatively good.
\captionsetup{font={scriptsize}}
\begin{figure}[tp]
\begin{center}
\setlength{\abovecaptionskip}{+0.2cm}
\setlength{\belowcaptionskip}{-0.0cm}
\centering
  \includegraphics[width=3.0in, height=2.4in]{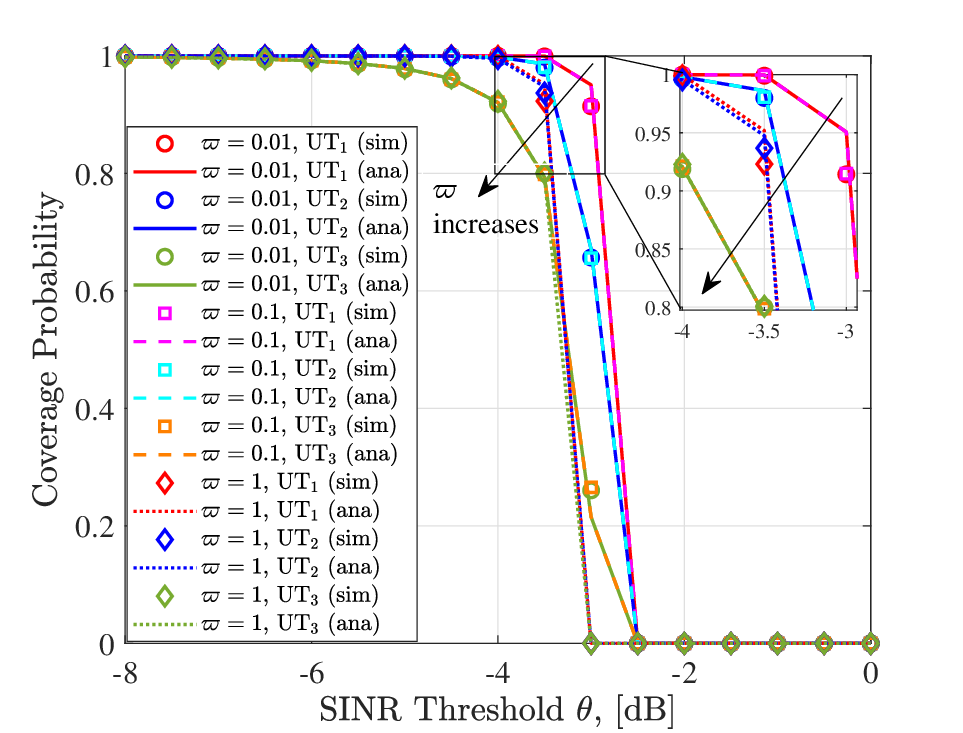}
\renewcommand\figurename{Fig.}
\caption{Coverage probability versus SINR threshold for $\kappa=2$ under different RI factors.}
\label{fig8}
\end{center}
\vspace{-6mm}
\end{figure}

Fig. \ref{fig9} presents the coverage probability of three UTs with two different satellite main-lobe gains. When increasing the main-lobe gain under a fixed side-lobe gain, the coverage of all three UTs is promoted, where the weakest UT has the highest gain, followed by UT$_2$ and UT$_1$. This accords with intuition, as enhancing main-lobe gains contributes positively to desired signals received at UTs.
\captionsetup{font={scriptsize}}
\begin{figure}[tp]
\begin{center}
\setlength{\abovecaptionskip}{+0.2cm}
\setlength{\belowcaptionskip}{-0.0cm}
\centering
  \includegraphics[width=3.0in, height=2.4in]{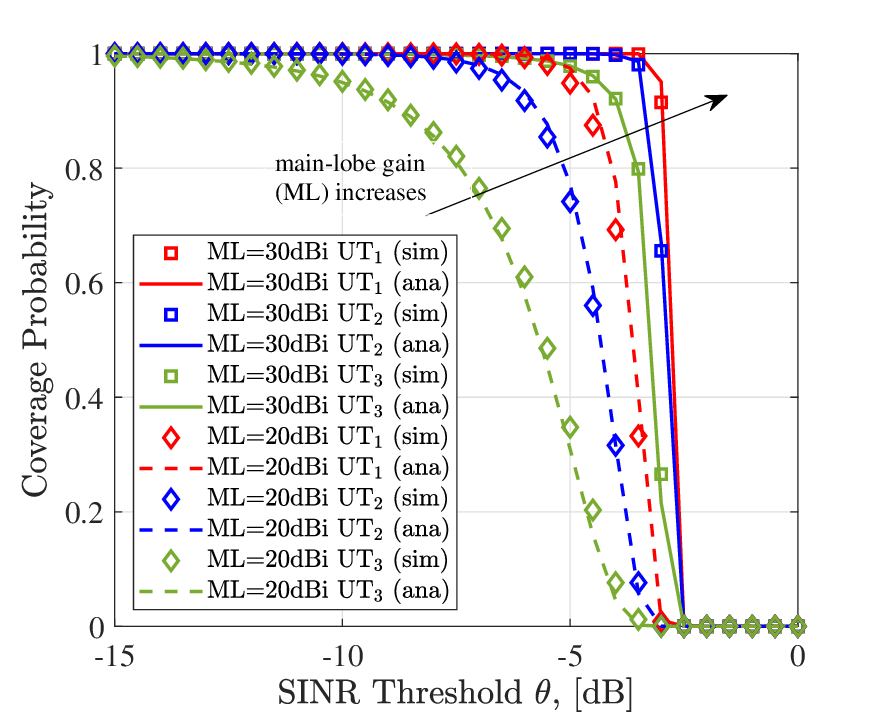}
\renewcommand\figurename{Fig.}
\caption{Coverage probability versus SINR threshold for $\kappa=2$ under different main-lobe (ML) gains.}
\label{fig9}
\end{center}
\vspace{-8mm}
\end{figure}

The number of satellites and their altitudes may have a clear impact on the system performance~\cite{shang2023coverage}. Herein, the scenario is based on the condition that the typical satellite is always the nearest satellite to the served UTs. However, the probability of occurrence of this condition can be calculated by the integral in the PDF of the nearest distance $r$ shown in (\ref{Formula_PDF_r_morethanone}) before multiplied by the coverage expressions (\ref{Formula_CP_MSP}) and (\ref{Formula_CP_ISINR}). 
In Fig. \ref{fig10}, red flags indicate the optimal number of satellites at a certain altitude. 
The conditional mean coverage probability is provided in Fig. \ref{fig10}(a), where the optimal altitude and optimal number of satellites are less than $300$ km and fewer than $1,000$ satellites, respectively. 
The unconditional mean coverage probability is shown in Fig. \ref{fig10}(b). The optimal satellite number decreases with the increase of satellite altitude.
Correspondingly, given the satellite bandwidth $B$, a plot of the mean data rate of UTs is given in \ref{fig10}(c).
This indicates that lower orbital altitudes are preferred for deploying a larger number of satellites to support the downlink LEO satellite NOMA networks.

\subsection{Spectral Efficiency}

\captionsetup{font={scriptsize}}
\begin{figure*}[tp]
\begin{center}
\setlength{\abovecaptionskip}{+0.1cm}
\setlength{\belowcaptionskip}{-0.0cm}
\centering
\includegraphics[width=7.0in, height=2.0in]{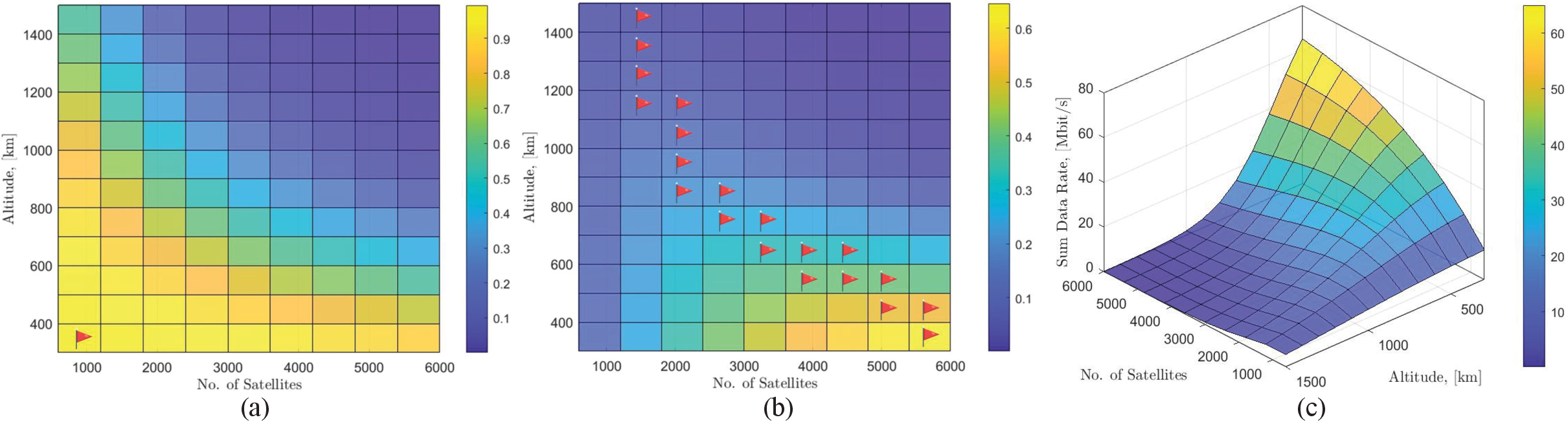}
\caption{Cases under ISINR ordering. (a) Conditional mean coverage probability. (b) Unconditional mean coverage probability. (c) Unconditional mean data rate.}
\label{fig10}
\end{center}
\vspace{-4mm}
\end{figure*}

Next, we discuss the impacts of PA coefficient sets on the sum SE. In Table \ref{tab1}, under three selected SINR thresholds, PA coefficient sets that maximize the sum SE of UTs are presented. With the increase of $\theta$, a declining trend is observed in the values of $\xi_1$ and $\xi_2$ under MSP ordering, while a growing trend is observed in the value of $\xi_3$. However, opposite trends apply for $\xi_1$, $\xi_2$ and $\xi_3$ under ISINR ordering.

\begin{table*}[!t]
\footnotesize
\caption{Optimal PA coefficient sets and maximized sum SE}
\centering
\begin{tabular}{|c|c|c|c|c|}
\hline
Ordering Scheme                 & Selected SINR Threshold $\theta$ (dB)       & -6                  & -3                & 0                   \\ \hline
\multirow{2}{*}{MSP Ordering}   & PA coefficient {[}$\xi_1$, $\xi_2$, $\xi_3${]} & {[}0.25, 0.35, 0.4{]} & {[}0.2, 0.3, 0.5{]} & {[}0.15, 0.3, 0.55{]} \\ \cline{2-5} 
                                & Sum SE (bits/s/Hz)    & 0.672182            & 1.21758           & 2.04569             \\ \hline
\multirow{2}{*}{ISINR Ordering} & PA coefficient {[}$\xi_1$, $\xi_2$, $\xi_3${]} & {[}0.1, 0.15, 0.75{]} & {[}0.1, 0.2, 0.7{]} & {[}0.15, 0.3, 0.55{]} \\ \cline{2-5} 
                                & Sum SE (bits/s/Hz)    & 0.672109            & 1.21645           & 1.86255             \\ \hline
\end{tabular}
\label{tab1}
\vspace{-2mm}
\end{table*}

The PA coefficient sets in Table \ref{tab1} aims to maximize the sum SE of the UTs instead of the fairness shown in Figs. \ref{fig4} and \ref{fig5}. Thus, we focus on the coverage performance under these PA coefficient sets. 
Fig. \ref{fig11} plots three UTs' coverage with coefficient sets FPA $\left[0.15, 0.3, 0.55\right]$, FPA2 $\left[0.2, 0.3, 0.5\right]$ and FPA3 $\left[0.1, 0.2, 0.7\right]$. We see that three sets of coverage probability lines converge to 0 at different values of $\theta$. It can be concluded that the convergence point is related to the PA coefficients of different UTs.

\captionsetup{font={scriptsize}}
\begin{figure}[tp]
\vspace{-2mm}
\begin{center}
\setlength{\abovecaptionskip}{+0.2cm}
\setlength{\belowcaptionskip}{-0.0cm}
\centering
  \includegraphics[width=3.0in, height=2.4in]{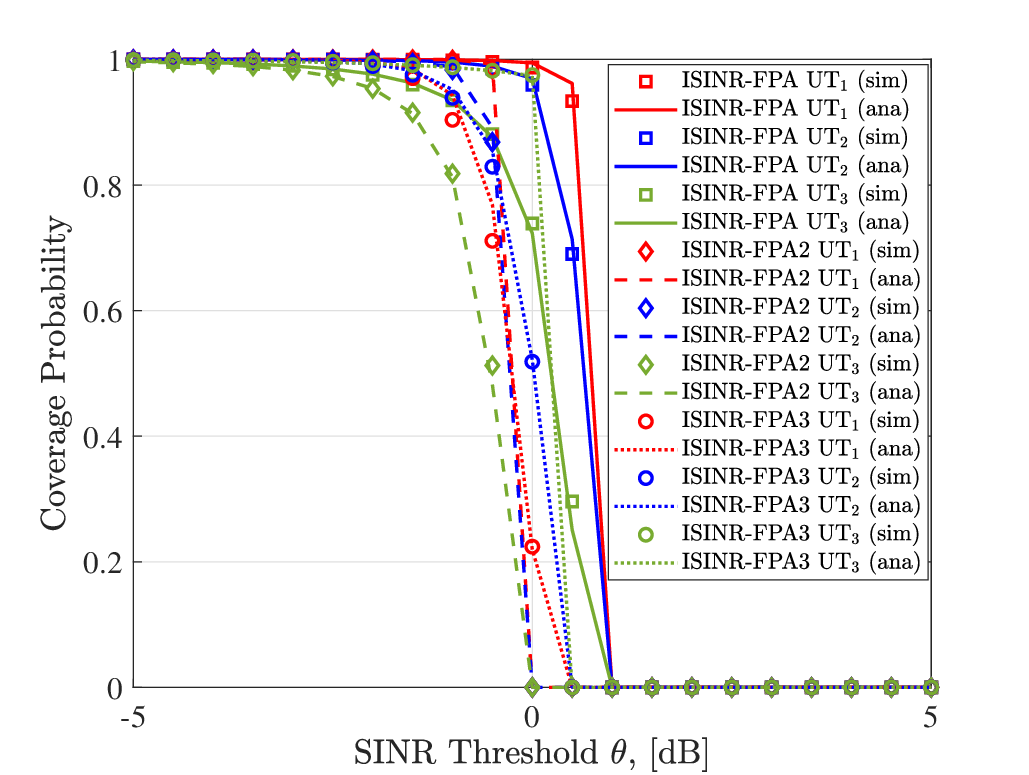}
\renewcommand\figurename{Fig.}
\caption{Comparison of three fixed PA schemes in terms of coverage probability.}
\label{fig11}
\end{center}
\vspace{-6mm}
\end{figure}

In Fig. \ref{fig12}, a comparison is made for UTs' sum SE between NOMA and OMA under both MSP and ISINR ordering schemes. 
Note that in the NOMA scheme, the plotted sum SE values represent the highest achievable values at each SINR threshold $\theta$, obtained by an ergodic search over all PA coefficient combinations for two, three, or four UTs.
In general, NOMA consistently outperforms OMA in terms of sum SE within the range of $-6$ dB $< \theta < 5$ dB.
As $\theta$ increases from $-6$ dB to $4$ dB, the sum SE gradually improves because the UTs are multiplexed.
However, due to the decoding constraints inherent in NOMA, i.e., the NOMA necessary condition as mentioned earlier, different UT numbers correspond to different SINR thresholds beyond which decoding fails, resulting in sudden performance drops.

Specifically, at $\theta = -2$ dB, the NOMA scheme achieves a local maximum sum SE of $1.8$ bits/s/Hz with four UTs. Beyond this point, the decoding for four UTs fails, and a performance drop occurs at $\theta = -1.5$ dB, where the NOMA configuration switches to three UTs. 
Following this, another performance increase is observed as $\theta$ rises, reaching the global maximum of $2.1$ bits/s/Hz around $\theta = 0.5$ or $\theta = 1$ dB. This peak corresponds to the optimal three-UT NOMA configuration.
The sum SE experiences a third increase at $\theta=4$ dB, reaching the highest $2.3$ bits/s/Hz for two UTs.
The sum SE of 2.3 bits/s/Hz is the maximum for not only the two-UT case but also the global maximum.
Compared with the maximum sum SE in OMA, an apparent gain of approximately 35\% is shown, demonstrating the effectiveness of using NOMA for downlink LEO multi-satellite networks.

Another comparison of the sum SE is made between NOMA and OMA in Fig. \ref{fig13}, with fixed SINR thresholds and an increasing number of UTs. Four combinations of schemes are involved, i.e., MSP-ETPA, MSP-ERPA, ISINR-ETPA, ISINR-ERPA. Each combination is shown to have its own optimal number of UTs for a maximum sum SE. In addition, MSP schemes have a higher sum SE than those using ISINR, and ESSPA schemes are advantageous over ETPA schemes.

\captionsetup{font={scriptsize}}
\begin{figure}[tp]
\begin{center}
\setlength{\abovecaptionskip}{+0.2cm}
\setlength{\belowcaptionskip}{-0.0cm}
\centering
  \includegraphics[width=3.0in, height=2.4in]{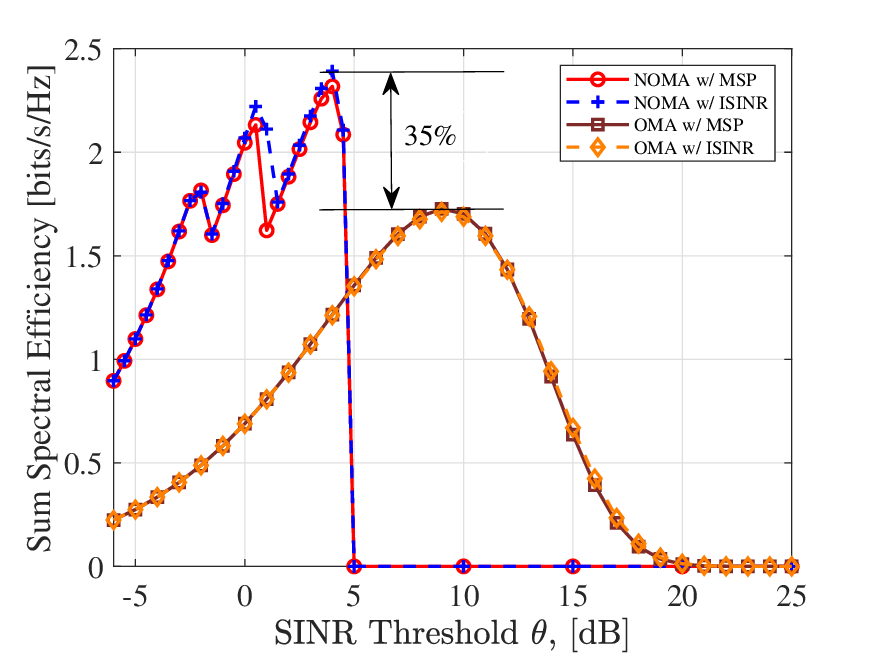}
\renewcommand\figurename{Fig.}
\caption{Comparison of NOMA and OMA under different ordering schemes.}
\label{fig12}
\end{center}
\vspace{-6mm}
\end{figure}

\captionsetup{font={scriptsize}}
\begin{figure}[tp]
\begin{center}
\setlength{\abovecaptionskip}{+0.2cm}
\setlength{\belowcaptionskip}{-0.0cm}
\centering
  \includegraphics[width=3.0in, height=2.4in]{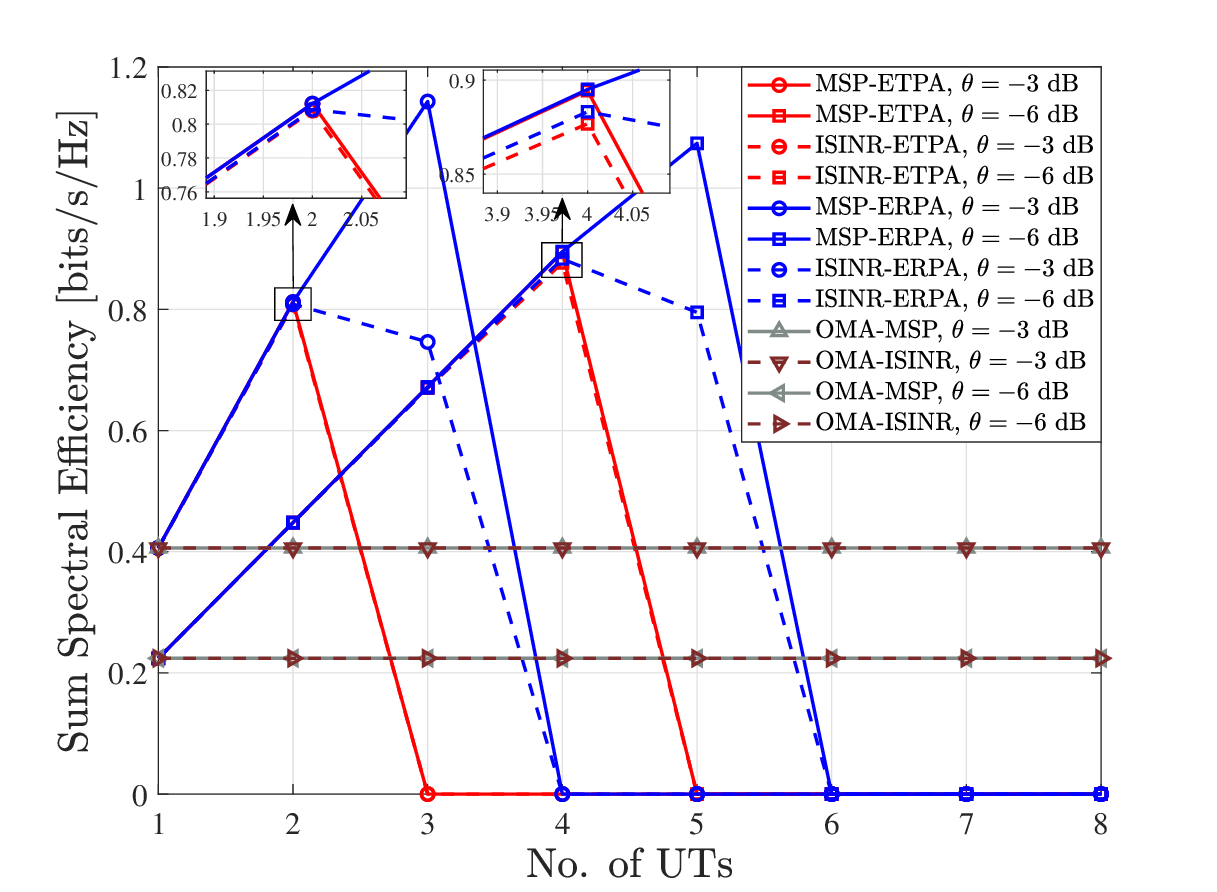}
\renewcommand\figurename{Fig.}
\caption{Comparison of NOMA and OMA with different thresholds and numbers of UTs.}
\label{fig13}
\end{center}
\vspace{-6mm}
\end{figure}

To obtain the maximum sum SE, further investigation is needed to determine whether this optimal number of UTs varies with SINR thresholds, different PA schemes, and ordering schemes.
The sub-figures in Fig. \ref{fig14} deal with this concern by involving four combinations of schemes, i.e., MSP-ETPA, MSP-ERPA, ISINR-ETPA, ISINR-ERPA. 
Different numbers of UTs are considered here to examine whether the theoretical performance can be further improved when increasing the number of multiplexed UTs, following the NOMA principle that more UTs sharing the same RB may enhance the sum spectral efficiency.
A clear similarity is that the optimal UT number for the maximum sum SE is two, with an SINR threshold around $0$ dB. Moreover, for most SINR thresholds, the sum SE increases with a growing number of UTs\footnote{The exceptions are that the sum SE drops at $\theta=-8.5$ dB with 8 UTs in MSP-ETPA and ISINR-ETPA, drops at $\theta=-7.5$ dB with 7 UTs in MSP-ERPA and ISINR-ERPA, before the sum SE reaches zero. This does not influence the overall trend.} and reaches a certain highest value before dropping to zero.

A tradeoff between UT numbers and SINR thresholds is further revealed: lower SINR thresholds and higher user densities allow more UTs to be multiplexed for improved spectral efficiency, while higher SINR thresholds or sparse scenarios favor fewer UTs.
This trend also indicates that, although adding more UTs can theoretically improve the sum SE, the gain eventually saturates due to the increased SINR threshold and potentially SIC processing burden in practice. 
Hence, while the multi-user results provide theoretical insights, practical NOMA-enabled LEO satellite implementations are expected to involve an appropriate number of UTs based on its SIC capability to balance spectral efficiency and system complexity.
Therefore, the tradeoff between the number of UTs served and the complexity of the system decoding must be considered for practical NOMA-enabled LEO satellite implementations.

\captionsetup{font={scriptsize}}
\begin{figure*}[tp]
\begin{center}
\vspace{-2mm}
\centering
\setlength{\abovecaptionskip}{+0.2cm}
\setlength{\belowcaptionskip}{-0.0cm}
\centering
    \includegraphics[width=7.1in, height=1.8in]{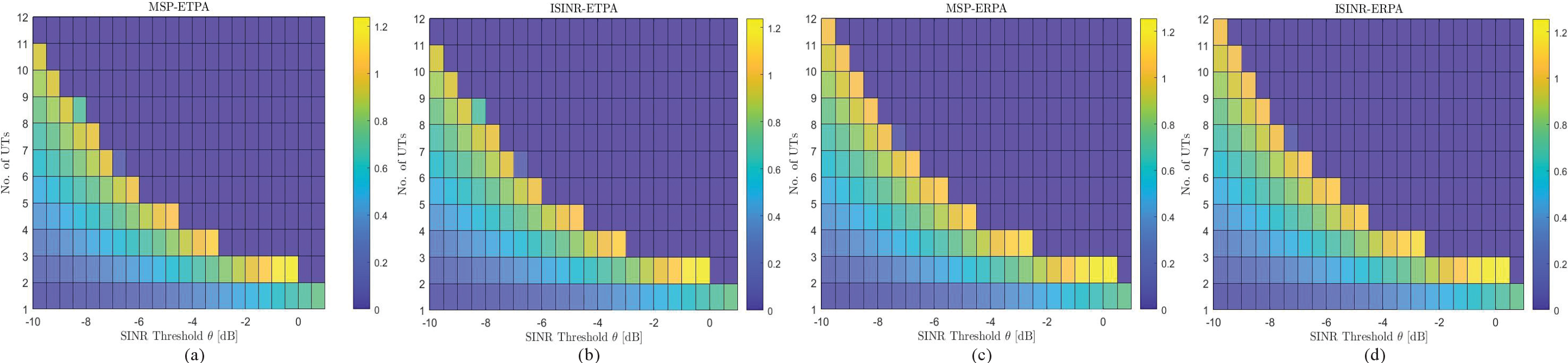}
    \caption{Sum SE vs. SINR threshold and UT number: (a) MSP ordering with ETPA. (b) ISINR ordering with ETPA. (c) MSP ordering with ERPA. (d) ISINR ordering with ERPA.}
\label{fig14}
\end{center}
\vspace{-6mm}
\end{figure*}

\section{Conclusion}
This paper developed a theoretical model to analyze downlink NOMA in LEO multi-satellite networks, considering a generalized number of UTs.
Two NOMA ordering techniques, i.e., MSP-based and ISINR-based ordering, were examined, and the corresponding coverage probabilities of served UTs were derived separately. 
Our results show that LoS components provide better coverage for served UTs compared to Rayleigh fading channels. The improvement in coverage also comes with better SIC effects of NOMA, although near-perfect SIC does not yield additional benefits. Furthermore, an increase in main-lobe gain contributes positively to overall system performance. 
There is a tradeoff between the number of satellites and their altitudes to maximize the sum SE of the UTs, with fewer satellites at higher altitudes being more advantageous. 
Additionally, while a maximum gain $35\%$ of NOMA over OMA can be obtained, there exists a maximum SINR threshold for each PA scheme. When SINR exceeds this threshold, the benefits of NOMA over OMA diminish. 
While the maximum sum SE is achieved in the two-UT case, the optimal number of UTs varies for different feasible SINR thresholds: fewer UTs are preferred at higher feasible thresholds, whereas more UTs are beneficial at lower thresholds.

With the key findings presented above, practical NOMA use cases can be envisioned in dense-area broadband access, machine-type communications-based satellite Internet of things (IoT), and air–ground integrated networks. Particularly, in densely populated areas, when system complexity and user density allow, more multiplexed NOMA UTS can be supported under a lower SINR threshold to enhance the sum SE. In sparser scenarios, fewer NOMA UTS with higher SINR thresholds are preferable for even higher system performance.
Future work may incorporate Earth curvature effects to improve large-scale coverage accuracy \cite{dong2025modeling}, and adopt more advanced ordering and PA algorithms such as PF \cite{liu2016proportional} and sum-rate maximization \cite{sun2016optimal} to improve the performance of NOMA-based LEO satellite networks, or explore their performance–complexity tradeoffs. Moreover, extending the framework to dynamic scenarios with user mobility and time-varying satellite geometry \cite{wang2022high}, as well as incorporating UAV-assisted modeling, especially for harsh or coverage-challenged environments \cite{dong2025uplink}, would further enhance its practicality and predictive capability for real-world NOMA-enabled LEO satellite networks.

\appendices 
\section{Proof of Lemma 1}
The Laplace transform of the first term in (\ref{Formula_I_inter}) is given by
\begin{align}
&{{\mathcal{L}}_{I_{i}^{\text{inter}}\left| {\mathbf{s}_{t}},t\ge 2 \right.}}\left( s \right) \nonumber \\
& =\mathbb{E}\left\{ {{e}^{-sI_{i}^{\text{inter}}}}\left| \left\| {\mathbf{s}_{t}}-{\mathbf{u}_{i}} \right\|=r,t\ge 2 \right. \right\} \nonumber \\
& \overset{\left( a \right)}{\mathop{=}}\,\exp \left( -\lambda_{\textcolor{black}{\text{S}}} \int_{v\in \mathcal{V}_{r}^{c}}{\left( 1-\mathbb{E}\left[ {{e}^{-s\frac{{{G}_{\text{sl}}}}{{{G}_{\text{ml}}}}{{\left| {{g}_{{\mathbf{d}_{i}}}} \right|}^{2}}{{v}^{-\alpha }}}} \right] \right)dv} \right) \nonumber \\ 
& \overset{\left( b \right)}{\mathop{=}}\,\exp \left( -\lambda_{\textcolor{black}{\text{S}}} \int_{v\in \mathcal{V}_{r}^{c}}{\left[ 1-\frac{1}{{{\left( 1+\frac{{{G}_{\text{sl}}}}{{{G}_{\text{ml}}}}\frac{s{{v}^{-\alpha }}}{\kappa } \right)}^{\kappa }}} \right]dv} \right) \nonumber \\
& \overset{\left( c \right)}{\mathop{=}}\,\exp \left( -2\pi \lambda_{\textcolor{black}{\text{S}}} \frac{R_{\textcolor{black}{\text{S}}}}{R_{\textcolor{black}{\text{E}}}} \int_{r}^{{{R}_{\max }}}{\left[ 1-\frac{1}{{{\left( 1+\frac{{{G}_{\text{sl}}}}{{{G}_{\text{ml}}}}\frac{s{{v}^{-\alpha }}}{\kappa } \right)}^{\kappa }}} \right]dv} \right) \nonumber \\ 
& \overset{\left( d \right)}{\mathop{=}}\,\exp \left( -\lambda_{\textcolor{black}{\text{S}}} \pi \frac{R_{\textcolor{black}{\text{S}}}}{R_{\textcolor{black}{\text{E}}}} {{\left( \frac{{{G}_{\text{sl}}}}{{{G}_{\text{ml}}}}\frac{s}{\kappa } \right)}^{\frac{2}{\alpha }}} \right. \nonumber \\
& {\quad\quad\quad\quad} \cdot \left. \int_{{{\left( \frac{{{G}_{\text{sl}}}}{{{G}_{\text{ml}}}}\frac{s}{\kappa } \right)}^{-\frac{2}{\alpha }}}{{r}^{2}}}^{{{\left( \frac{{{G}_{\text{sl}}}}{{{G}_{\text{ml}}}}\frac{s}{\kappa } \right)}^{-\frac{2}{\alpha }}}{{R}_{\max }}^{2}}{\left[ 1-\frac{1}{{{\left( 1+{{u}^{-\frac{\alpha }{2}}} \right)}^{\kappa }}} \right]du} \right),
\label{Formula_I_inter_1st}
\end{align}
where (a) comes from the probability generating functional of the PPP~\cite{park2022tractable}, (b) holds because $\left|{g}_{{\mathbf{d}_{i}}}\right|$ is a Nakagami-$m$ random variable, (c) follows from $\frac{\partial \left| {{\mathcal{V}}_{r}} \right|}{\partial r}=2 \frac{R_{\textcolor{black}{\text{S}}}}{R_{\textcolor{black}{\text{E}}}} \pi r$, while (d) is due to the change of variable $u={{\left( \frac{{{G}_{\text{sl}}}}{{{G}_{\text{ml}}}}\frac{s}{\kappa } \right)}^{-\frac{2}{\alpha }}}{{v}^{2}}$ and $du=2v{{\left( \frac{{{G}_{\text{sl}}}}{{{G}_{\text{ml}}}}\frac{s}{\kappa } \right)}^{-\frac{2}{\alpha }}}$. 
This does not contain the nearest interfering satellite, which brings the second term in (\ref{Formula_I_inter}).

As $\left|{g}_{{\mathbf{d}_{i}}}\right|$ is a Nakagami-$m$ random variable, $\left|{g}_{{\mathbf{d}_{i}}}\right|^2$ is a Gamma random variable. Then, using the moment generating function (MGF) of $\left|{g}_{{\mathbf{d}_{i}}}\right|^2$ and linear transformations of random variables, the Laplace transform of the second term in (\ref{Formula_I_inter}) is 
\begin{equation}
\begin{aligned}
    {{\mathcal{L}}_{I_{i}^{\text{inter}}\left| {\mathbf{s}_{t}},t=1 \right.}}\left( s \right)
    & =\mathbb{E}\left[ {{e}^{-sI_{i}^{\text{inter}}}}\left| \left\| {\mathbf{s}_{1}}-{\mathbf{u}_{1}} \right\|=r \right. \right] \\
    & =\mathbb{E}\left[ {\text{exp}\left({-s\frac{{{G}_{\text{sl}}}}{{{G}_{\text{ml}}}}{{\sum\limits_{\begin{smallmatrix} 
 \mathbf{s}\in \Phi_S  \\ 
 {\mathbf{s}_{t}},t=1 
\end{smallmatrix}}{r}}^{-\alpha }}{{\left| {{g}_{\mathbf{{d}_{i}}}} \right|}^{2}}}\right)} \right] \\
& ={{\mathbb{E}}_{z}}\left[ {{\left( 1+s\frac{{{G}_{\text{sl}}}}{{{G}_{\text{ml}}}}{{z}^{-\alpha }}\frac{1}{\beta } \right)}^{-\kappa }} \right].
\end{aligned}
\label{Formula_I_inter_2nd}
\end{equation}

Combining (\ref{Formula_I_inter_1st}) and (\ref{Formula_I_inter_2nd}), and taking the previous approximation of $\mathbb{E}\left\{z|r\right\}\approx r$, the final Laplace transform is
\begin{align}
{{\mathcal{L}}_{I_{i}^{\text{inter}}}}\left( s \right) 
& =\mathbb{E}\left[ {{e}^{-sI_{i}^{\text{inter}}}}\left| \left\| {\mathbf{s}_{t}}-{\mathbf{u}_{i}} \right\|=r,t=1 \right. \right] \nonumber \\
&{\quad} \cdot {{\mathbb{E}}_{z}}\left[ {{\left( 1+s\frac{{{G}_{\text{sl}}}}{{{G}_{\text{ml}}}}{{z}^{-\alpha }}\frac{1}{\beta } \right)}^{-\kappa }} \right] \nonumber \\ 
& \approx \exp \left( -\lambda_{\textcolor{black}{\text{S}}} \pi \frac{R_{\textcolor{black}{\text{S}}}}{R_{\textcolor{black}{\text{E}}}} {{\left( \frac{{{G}_{\text{sl}}}}{{{G}_{\text{ml}}}}\frac{s}{\kappa } \right)}^{\frac{2}{\alpha }}} 
\int_{{{\left( \frac{{{G}_{\text{sl}}}}{{{G}_{\text{ml}}}}\frac{s}{\kappa } \right)}^{-\frac{2}{\alpha }}}{{r}^{2}}}^{{{\left( \frac{{{G}_{\text{sl}}}}{{{G}_{\text{ml}}}}\frac{s}{\kappa } \right)}^{-\frac{2}{\alpha }}}{{R}_{\max }}^{2}}\right. \nonumber \\
& {\quad} \cdot \left.  {\left[ 1-\frac{1}{{{\left( 1+{{u}^{-\frac{\alpha }{2}}} \right)}^{\kappa }}} \right]du} \right) 
\cdot {{\left( 1+\frac{{{G}_{\text{sl}}}}{{{G}_{\text{ml}}}}\frac{s}{{{r}^{\alpha }}\beta } \right)}^{-\kappa }},
\end{align}
which completes the proof.

\section{Proof of Theorem 1}
Since $\left| {{h}_{i}} \right|$ is the Nakagami-$m$ random variable, the corresponding $\left| {{h}_{i}} \right|^2$ is a Gamma random variable~\cite{nakagami1960the}. The complementary CDF (CCDF) of $\left| {{h}_{i}} \right|^2$ is given by
$\mathbb{P}\left[ {{\left| {{h}_{i}} \right|}^{2}}\ge x \right]={{e}^{-\beta x}}\sum\limits_{k=0}^{\kappa -1}{\frac{{{\left( \beta x \right)}^{k}}}{k!}}$.
Leveraging this property, the coverage probability of the typical UT$_i$ based on MSP ordering is 
\begin{align}
 & \mathbb{P}_{\text{M}}\left( {{\Lambda}_{i}} \right) \nonumber \\
& ={{\mathbb{E}}_{l}}\Bigg[ {{\mathbb{E}}_{r}}\Bigg[ \mathbb{E}\Bigg[ {{e}^{-\beta \left( {{l}_{i}}^{\alpha }\left( I_{i}^{\text{inter}}+{\bar{\sigma}^{2}} \right){{Q}_{i}} \right)}} \nonumber \\
 & \cdot \sum\limits_{k=0}^{\kappa -1}{\frac{{{\beta }^{k}}{{\left( {{l}_{i}}^{\alpha }\left( I_{i}^{\text{inter}}+{\bar{\sigma}^{2}} \right){{Q}_{i}} \right)}^{k}}}{k!}} \Bigg] \Bigg] \cdot {{f}_{{{L}_{i}}}}\left( l \right)\left| \left\| {\mathbf{s}_{0}}-{\mathbf{u}_{i}} \right\|={{l}_{i}} \right. \Bigg] \nonumber \\ 
 & \overset{\left( a \right)}{\mathop{=}}\,{{\mathbb{E}}_{l}}\Bigg[ {{\mathbb{E}}_{r}}\Bigg[ \sum\limits_{k=0}^{\kappa -1}{\frac{{{\beta }^{k}}{{l}_{i}}^{\alpha k}{{Q}_{i}}^{k}}{k!}} \nonumber \\
 & \cdot {{\left( -1 \right)}^{k}}\frac{{{d}^{k}}{{\mathcal{L}}_{I_{i}^{\text{inter}}+{\bar{\sigma}^{2}}}}\left( s \right)}{d{{s}^{k}}}\left| _{s=\beta {{l}_{i}}^{\alpha }{{Q}_{i}}} \right. \Bigg] \cdot {{f}_{{{L}_{i}}}}\left( l \right)\left| \left\| {\mathbf{s}_{0}}-{\mathbf{u}_{i}} \right\|={{l}_{i}} \right. \Bigg] \nonumber \\ 
 & \overset{\left( b \right)}{\mathop{=}}\,\int_{{{L}_{\min }}}^{{{L}_{\max }}} \int_{{{R}_{\min }}}^{{{R}_{\max }}}\sum\limits_{k=0}^{\kappa -1}{\frac{{{\beta }^{k}}{{l}^{\alpha k}}{{Q}_{i}}^{k}}{k!}} \nonumber \\
 & \cdot {{\left( -1 \right)}^{k}}\frac{{{d}^{k}}{{\mathcal{L}}_{I_{i}^{\text{inter}}+{\bar{\sigma}^{2}}}}\left( s \right)}{d{{s}^{k}}}\left| _{s=\beta {{l}^{\alpha }}{{Q}_{i}}} \right. \cdot {{f}_{R}}\left( r \right)dr {{f}_{{{L}_{i}}}}\left( l \right)dl,
\end{align}
where (a) is obtained from the derivative property of the Laplace transform, i.e., $\mathbb{E}\left[ {{X}^{k}}{{e}^{-sX}} \right]={{\left( -1 \right)}^{k}}\frac{{{d}^{k}}{{\mathcal{L}}_{X}}\left( s \right)}{d{{s}^{k}}}$, and (b) follows from the expectation over $r$ and $l$, which completes the proof.

\section{Proof of Theorem 2}

The CDF of the unordered ISINR, $Z$, is expressed as
\begin{align}
    &{{F}_{Z}}\left( x \right) \nonumber \\
    &=\mathbb{P}\left( Z=\frac{{{l}^{-\alpha }}{{\left| h \right|}^{2}}}{{{I}^{\text{inter}}}+{\bar{\sigma}^{2}}}\le x \right) \nonumber \\
    & ={{\mathbb{E}}_{L,R,{{I}^{\text{inter}}}}}\left[ \mathbb{P}\left( {{\left| h \right|}^{2}}\le x{{l}^{\alpha }}\left( {{I}^{\text{inter}}}+{\bar{\sigma}^{2}} \right)\left| L,{{I}^{\text{inter}}} \right. \right) \right] \nonumber \\
    & \overset{\left( a \right)}{\mathop{=}} {{\mathbb{E}}_{L,R,{{I}^{\text{inter}}}}}\left[ 1-\sum\limits_{k=0}^{\kappa -1}{\frac{{{\left[ \beta \left( x{{l}^{\alpha }} \right) \right]}^{k}}}{k!}{{\left( {{I}^{\text{inter}}}+{\bar{\sigma}^{2}} \right)}^{k}}}{{e}^{-\beta x{{l}^{\alpha }}\left( {{I}^{\text{inter}}}+{\bar{\sigma}^{2}} \right)}} \right] \nonumber \\ 
    & \overset{\left( b \right)}{\mathop{=}} 1-\int_{{{L}_{\min }}}^{{{L}_{\max }}} \int_{{{R}_{\min }}}^{{{R}_{\max }}} 
    \sum\limits_{k=0}^{\kappa -1}{\frac{{{\left( \beta x{{l}^{\alpha }} \right)}^{k}}}{k!}} \nonumber \\
    & {\quad\quad\quad} \cdot {{\left( -1 \right)}^{k}}\frac{{{d}^{k}}{{\mathcal{L}}_{{{I}^{\text{inter}}}+{\bar{\sigma}^{2}}}}\left( \beta x{{l}^{\alpha }} \right)}{d{{s}^{k}}} {{f}_{R}}\left( r \right) {{f}_{L}}\left( l \right)drdl,
\end{align}
where (a) is due to the derivative property of the Laplace transform, i.e., $\mathbb{E}\left[ {{X}^{k}}{{e}^{-sX}} \right]={{\left( -1 \right)}^{k}}\frac{{{d}^{k}}{{\mathcal{L}}_{X}}\left( s \right)}{d{{s}^{k}}}$, and (b) is because of the expectation over both $l$ and $r$.
The CDF of the ordered ISINR $Z_i$ can be approximated as 
\begin{equation}
{F}_{{Z}_{i}}\left( x \right)
    \approx \sum\limits_{k = N_{\textcolor{black}{\text{U}}} + 1-i}^{N_{\textcolor{black}{\text{U}}}}{\left( \begin{matrix}
   N_{\textcolor{black}{\text{U}}}  \\
   k  \\
\end{matrix} \right){{\left[ {{F}_{Z}}\left( x \right) \right]}^{k}}{{\left[ 1-{{F}_{Z}}\left( x \right) \right]}^{N_{\textcolor{black}{\text{U}}} - k}}}.
\end{equation}
Finally, the coverage probability of typical UT$_i$ based on ISINR is written as $\mathbb{P}_{\text{I}}\left( {{\Lambda}_{i}} \right)=\mathbb{P}\left( {{Z}_{i}}>{{Q}_{i}} \right) = 1-{F}_{{Z}_{i}}\left( {Q}_{i} \right)$, which completes the proof.

\section{Proof of Corollary 1}

Taking $\kappa=1$, (\ref{Formula_CP_MSP}) can be expressed as
\begin{equation}
\begin{aligned}
    & \mathbb{P}_{\text{M}}\left( {\Lambda_{i};\kappa=1} \right) \\
    & = \int_{{{L}_{\min }}}^{{{L}_{\max }}}{\int_{{{R}_{\min }}}^{{{R}_{\max }}}{{{\mathcal{L}}_{I_{i}^{\text{inter}}+{\bar{\sigma}^{2}}}}\left( s \right)\left| _{s=\beta {{l}^{\alpha }}{{Q}_{i}}} \right. {{f}_{R}}\left( r \right)dr} {{f}_{{{L}_{i}}}}\left( l \right)dl},
    \label{Formula_CP_MSP_kappa_1}
\end{aligned}
\end{equation}
where
\begin{equation}
\begin{aligned}
    {{\mathcal{L}}_{I_{i}^{\text{inter}}+{\bar{\sigma}^{2}}}} \left( s \right) 
    = {{\mathcal{L}}_{I_{i}^{\text{inter}}}}\left( s \right){{\mathcal{L}}_{{\bar{\sigma}^{2}}}}\left( s \right) = {{\mathcal{L}}_{I_{i}^{\text{inter}}}}\left( s \right){{e}^{-s\bar{\sigma}^{2}}}.
    \label{Formula_LT_I_inter_sigma}
\end{aligned}
\end{equation}
The integral part in (\ref{Formula_LT_I_inter}), denoted as $F(u;s)$, is written as
\begin{align}
 & F(u;s) \nonumber \\
 & = 
 \int_{{{\left( \frac{{{G}_{\text{sl}}}}{{{G}_{\text{ml}}}}\frac{s}{\kappa } \right)}^{-\frac{2}{\alpha }}}{{r}^{2}}}^{{{\left( \frac{{{G}_{\text{sl}}}}{{{G}_{\text{ml}}}}\frac{s}{\kappa } \right)}^{-\frac{2}{\alpha }}}{{R}_{\max }}^{2}}{\left[ 1-\frac{1}{{{\left( 1+{{u}^{-\frac{\alpha }{2}}} \right)}^{\kappa }}} \right]du} \nonumber \\
 & = {{\left( \frac{{{G}_{\text{sl}}}}{{{G}_{\text{ml}}}}\frac{s}{\kappa } \right)}^{-\frac{2}{\alpha }}}\left( {{R}_{\max }}^{2}-{{r}^{2}} \right)-{{\left( \frac{{{G}_{\text{sl}}}}{{{G}_{\text{ml}}}}\frac{s}{\kappa } \right)}^{-\frac{2}{\alpha }}} \nonumber \\
 & {\quad} \cdot \left[ {{R}_{\max}}^{2}{\cdot } \eta \left( s,{R}_{\max};\alpha ,\kappa ,{{G}_{\text{sl}}},{{G}_{\text{ml}}} \right) \right. \nonumber \\
 & {\quad\quad\quad\quad\quad} \left.- {{r}^{2}}{{\cdot }} \eta \left( s,r;\alpha ,\kappa ,{{G}_{\text{sl}}},{{G}_{\text{ml}}} \right) \right] \nonumber \\ 
 & ={\left( \frac{{{G}_{\text{sl}}}}{{{G}_{\text{ml}}}}\frac{s}{\kappa } \right)}^{-\frac{2}{\alpha }}  \left[ {R_{\max}}^{2}\left( 1 - \eta \left( s,R_{\max};\alpha ,\kappa ,{{G}_{\text{sl}}},{{G}_{\text{ml}}} \right) \right) \right. \nonumber \\
 & {\quad\quad\quad\quad\quad\quad\quad\quad\quad} \left. - {r^2}\left( 1 - \eta \left( s,r;\alpha ,\kappa ,{{G}_{\text{sl}}},{{G}_{\text{ml}}} \right) \right) \right],
 \label{Formula_F_us}
\end{align}
where $_{2}{{F}_{1}}\left[ \cdot ,\cdot ,\cdot ,\cdot  \right]$ is the hypergeometric function, and $\eta \left( s,y;\alpha ,\kappa ,{{G}_{\text{sl}}},{{G}_{\text{ml}}} \right) 
    = {_{2}}{{F}_{1}}\left[ -\frac{2}{\alpha },\kappa ,\frac{\alpha -2}{\alpha },-{{y}^{-\alpha }}\left( \frac{{{G}_{\text{sl}}}}{{{G}_{\text{ml}}}}\frac{s}{\kappa } \right) \right]$. 
Inserting (\ref{Formula_F_us}) into (\ref{Formula_LT_I_inter}), the closed-form expression for Laplace transform of interference signal is written as
\begin{equation}
\begin{aligned}
&{{\mathcal{L}}_{I_{i}^{\text{inter}}}}\left( s \right)
= {{\left( 1+\frac{{{G}_{\text{sl}}}}{{{G}_{\text{ml}}}}\frac{s}{{{r}^{\alpha }}\beta } \right)}^{-\kappa }}\\
&\cdot \exp \left( -\lambda_{\textcolor{black}{\text{S}}} \pi \frac{R_{\textcolor{black}{\text{S}}}}{R_{\textcolor{black}{\text{E}}}} \left\{ \begin{aligned}
  & {{R}_{\max}}^{2}\left[ 1 - \eta \left( s,{R}_{\max};\alpha ,\kappa ,{{G}_{\text{sl}}},{{G}_{\text{ml}}} \right) \right] \\ 
 & -{{r}^{2}}\left[ 1 - \eta \left( s,r;\alpha ,\kappa ,{{G}_{\text{sl}}},{{G}_{\text{ml}}} \right) \right] \\ 
 \end{aligned} \right\} \right).
\end{aligned}
\label{Formula_LT_I_inter_closed}
\end{equation}
Inserting (\ref{Formula_LT_I_inter_closed}) into (\ref{Formula_CP_MSP_kappa_1}), a more tractable expression for coverage probability of MSP ordering is obtained in (\ref{Formula_CP_MSP_kappa_1_1}), which completes the proof.

\section{Proof of Corollary 3}
Taking $\kappa=2$, (\ref{Formula_CP_MSP}) can be expressed as
\begin{align}
    & \mathbb{P}_{\text{M}}\left( {\Lambda_{i};\kappa=2} \right) \nonumber \\
    & = \int_{{{L}_{\min }}}^{{{L}_{\max }}} \int_{{{R}_{\min }}}^{{{R}_{\max }}}
    \Bigg\{ \left[ \left( 1+s{\bar{\sigma}^{2}} \right){{\mathcal{L}}_{I_{i}^{\text{inter}}}}\left( s \right)-s\frac{d{{\mathcal{L}}_{I_{i}^{\text{inter}}}}\left( s \right)}{ds} \right] \nonumber \\
    & {\quad\quad\quad\quad\quad\quad\quad\quad} \cdot {{e}^{-s{\bar{\sigma}^{2}}}} \Bigg\} \left| _{s=\beta {{l}^{\alpha }}{{Q}_{i}}} \right. {{f}_{R}}\left( r \right)dr {{f}_{{{L}_{i}}}}\left( l \right)dl.
\label{Formula_CP_MSP_kappa_2_0}
\end{align}
Let us denote the following terms
\begin{equation} \nonumber
\begin{aligned}
    {{t}_{1}}\left( s \right) = - \lambda_{\textcolor{black}{\text{S}}} \pi \frac{R_{\textcolor{black}{\text{S}}}}{R_{\textcolor{black}{\text{E}}}} {{\left( \frac{{{G}_{\text{sl}}}}{{{G}_{\text{ml}}}}\frac{s}{\kappa } \right)}^{\frac{2}{\alpha }}}, {\;}
    {{t}_{2}}\left( s \right)={{\left( 1+\frac{{{G}_{\text{sl}}}}{{{G}_{\text{ml}}}}\frac{s}{{{r}^{\alpha }}\beta } \right)}^{-\kappa }},
\end{aligned}
\end{equation}
\begin{equation}\nonumber
    \frac{d{{t}_{1}}\left( s \right)}{ds} = - \lambda_{\textcolor{black}{\text{S}}} \pi \frac{R_{\textcolor{black}{\text{S}}}}{R_{\textcolor{black}{\text{E}}}} \frac{2}{\alpha }{{\left( \frac{{{G}_{\text{sl}}}}{{{G}_{\text{ml}}}}\frac{1}{\kappa } \right)}^{\frac{2}{\alpha }}}{{s}^{\frac{2}{\alpha }-1}},
\end{equation} 
\begin{equation}\nonumber
    \frac{d{{t}_{2}}\left( s \right)}{ds} = - \frac{{{G}_{\text{sl}}}}{{{G}_{\text{ml}}}}\frac{\kappa }{{{r}^{\alpha }}\beta }{{\left( 1+\frac{{{G}_{\text{sl}}}}{{{G}_{\text{ml}}}}\frac{s}{{{r}^{\alpha }}\beta } \right)}^{-\kappa -1}}.   
\end{equation}
Then, ${{\mathcal{L}}_{I_{i}^{\text{inter}}}}\left( s \right)$ can be rewritten as ${{\mathcal{L}}_{I_{i}^{\text{inter}}}}\left( s \right)=\exp \left[ {{t}_{1}}\left( s \right)\cdot F\left( u;s \right) \right]\cdot {{t}_{2}}\left( s \right)$, so that 
\begin{align}
    & \frac{d{{\mathcal{L}}_{I_{i}^{\text{inter}}}}\left( s \right)}{ds} 
     =\exp \left[ {{t}_{1}}\left( s \right)\cdot F\left( u;s \right) \right] \nonumber \\
    & {\quad} \cdot \left\{ \left[ \frac{d{{t}_{1}}\left( s \right)}{ds}F\left( u;s \right)+{{t}_{1}}\left( s \right)\frac{dF\left( u;s \right)}{ds} \right]{{t}_{2}}\left( s \right)+\frac{d{{t}_{2}}\left( s \right)}{ds} \right\}.
 \label{Formula_dLI_ds_1}
\end{align}

For the integral $F(u;s)$, denote its lower and upper limit as $a\left( s \right)={{\left( \frac{{{G}_{\text{sl}}}}{{{G}_{\text{ml}}}}\frac{s}{\kappa } \right)}^{-\frac{2}{\alpha }}}{{r}^{2}}$, and $b\left( s \right)={{\left( \frac{{{G}_{\text{sl}}}}{{{G}_{\text{ml}}}}\frac{s}{\kappa } \right)}^{-\frac{2}{\alpha }}}{{R}_{\max }}^{2}$, with derivatives 
$\frac{da\left( s \right)}{ds}=-\frac{2}{\alpha }{{\left( \frac{{{G}_{\text{sl}}}}{{{G}_{\text{ml}}}}\frac{1}{\kappa } \right)}^{-\frac{2}{\alpha }}}{{s}^{-\frac{2}{\alpha }-1}}{{r}^{2}}$, 
$\frac{db\left( s \right)}{ds}=-\frac{2}{\alpha }{{\left( \frac{{{G}_{\text{sl}}}}{{{G}_{\text{ml}}}}\frac{1}{\kappa } \right)}^{-\frac{2}{\alpha }}}{{s}^{-\frac{2}{\alpha }-1}}{{R}_{\max }}^{2}$.
Using the Leibniz integral rule, 
\begin{align}
& \frac{dF\left( u;s \right)}{ds} 
   = \left[ 1-\frac{1}{{{\left( 1+\left( \frac{{{G}_{\text{sl}}}}{{{G}_{\text{ml}}}}\frac{s}{\kappa } \right){{R}_{\max }}^{-\alpha } \right)}^{\kappa }}} \right] \nonumber \\ 
    & \cdot \left[ -\frac{2}{\alpha }{{R}_{\max }}^{2}{{\left( \frac{{{G}_{\text{sl}}}}{{{G}_{\text{ml}}}}\frac{1}{\kappa } \right)}^{-\frac{2}{\alpha }}}{{s}^{-\frac{2}{\alpha }-1}} \right] \nonumber \\
    & - \left[ 1-\frac{1}{{{\left( 1+\left( \frac{{{G}_{\text{sl}}}}{{{G}_{\text{ml}}}}\frac{s}{\kappa } \right){{r}^{-\alpha }} \right)}^{\kappa }}} \right] \left[ -\frac{2}{\alpha }{{r}^{2}}{{\left( \frac{{{G}_{\text{sl}}}}{{{G}_{\text{ml}}}}\frac{1}{\kappa } \right)}^{-\frac{2}{\alpha }}}{{s}^{-\frac{2}{\alpha }-1}} \right].
\label{Formula_d_F_us}
\end{align}
Inserting ${{t}_{1}}\left( s \right)$, $\frac{d{{t}_{1}}\left( s \right)}{ds}$, ${{t}_{2}}\left( s \right)$, $\frac{d{{t}_{2}}\left( s \right)}{ds}$, (\ref{Formula_F_us}) and (\ref{Formula_d_F_us}), (\ref{Formula_dLI_ds_1}) is then given by (\ref{Formula_dLI_ds_2}) at the top of the page.
\begin{figure*}[!ht]
\setlength{\abovecaptionskip}{-1.5cm}
\setlength{\belowcaptionskip}{-0.5cm}
\normalsize
\begin{equation}
\begin{aligned}
\frac{d{{\mathcal{L}}_{I_{i}^{\text{inter}}}}\left( s \right)}{ds}  
 = &\exp \left[ -\lambda_{\textcolor{black}{\text{S}}} \pi \frac{{{R}_{S}}}{{{R}_{E}}}\cdot \left\{ \begin{aligned}
  & {{R}_{\max}}^{2} \cdot \eta \left( s,{R}_{\max};\alpha ,\kappa ,{{G}_{\text{sl}}},{{G}_{\text{ml}}} \right)  
  -{r^2} \cdot \eta \left( s,r;\alpha ,\kappa ,{{G}_{\text{sl}}},{{G}_{\text{ml}}} \right) \\ 
\end{aligned} \right\} \right] \\
& \cdot \left\{ \begin{aligned}
  & \lambda_{\textcolor{black}{\text{S}}} \pi \frac{{{R}_{S}}}{{{R}_{E}}}\frac{2}{\alpha }{{s}^{-1}}\cdot \left\{ \begin{aligned}
  & {{R}_{\max }}^{2}\left[ \eta \left( s,{R}_{\max};\alpha ,\kappa ,{{G}_{\text{sl}}},{{G}_{\text{ml}}} \right) - \frac{1}{{{\left( 1+\left( \frac{{{G}_{\text{sl}}}}{{{G}_{\text{ml}}}}\frac{s}{\kappa } \right){{R}_{\max }}^{-\alpha } \right)}^{\kappa }}} \right] \\ 
 & -{{r}^{2}}\left[ \eta\left( s,r;\alpha ,\kappa ,{{G}_{\text{sl}}},{{G}_{\text{ml}}} \right) - \frac{1}{{{\left( 1+\left( \frac{{{G}_{\text{sl}}}}{{{G}_{\text{ml}}}}\frac{s}{\kappa } \right){{r}^{-\alpha }} \right)}^{\kappa }}} \right] \\ 
\end{aligned} \right\} \\ 
 & \cdot {{\left( 1+\frac{{{G}_{\text{sl}}}}{{{G}_{\text{ml}}}}\frac{s}{{{r}^{\alpha }}\beta } \right)}^{-\kappa }}-\frac{{{G}_{\text{sl}}}}{{{G}_{\text{ml}}}}\frac{\kappa }{{{r}^{\alpha }}\beta }{{\left( 1+\frac{{{G}_{\text{sl}}}}{{{G}_{\text{ml}}}}\frac{s}{{{r}^{\alpha }}\beta } \right)}^{-\kappa -1}} \\ 
\end{aligned} \right\}.
\end{aligned}
\label{Formula_dLI_ds_2}
\end{equation}
\hrulefill
\end{figure*}
Finally, inserting (\ref{Formula_dLI_ds_2}) into (\ref{Formula_CP_MSP_kappa_2_0}) and substitute ${{\mathcal{L}}_{I_{i}^{\text{inter}}}}$ with (\ref{Formula_LT_I_inter_closed}), the expression for coverage probability based on MSP ordering is obtained.

\bibliographystyle{IEEEtran}
\bibliography{references.bib}

\end{document}